\def\ps@pprintTitle{%
 \let\@oddhead\@empty
 \let\@evenhead\@empty
 \def\@oddfoot{}%
 \let\@evenfoot\@oddfoot}
\newtheorem{theorem}{Theorem}
\newtheorem{lemma}{Lemma}
\newtheorem{proposition}{Proposition}
\newcommand\ci{\perp\!\!\!\perp}
\begin{document}
\begin{frontmatter}

\title{Approximate Kernel-based Conditional Independence Tests for Fast Non-Parametric Causal Discovery}

\author{Eric V. Strobl, Shyam Visweswaran}
\address{Dept. of Biomedical Informatics \\ University of Pittsburgh}

\author{Kun Zhang}
\address{Dept. of Philosophy \\ Carnegie Mellon University}

\begin{abstract}
Constraint-based causal discovery (CCD) algorithms require fast and accurate conditional independence (CI) testing. The Kernel Conditional Independence Test (KCIT) is currently one of the most popular CI tests in the non-parametric setting, but many investigators cannot use KCIT with large datasets because the test scales cubicly with sample size. We therefore devise two relaxations called the Randomized Conditional Independence Test (RCIT) and the Randomized conditional Correlation Test (RCoT) which both approximate KCIT by utilizing random Fourier features. In practice, both of the proposed tests scale linearly with sample size and return accurate p-values much faster than KCIT in the large sample size context. CCD algorithms run with RCIT or RCoT also return graphs at least as accurate as the same algorithms run with KCIT but with large reductions in run time\footnote{R implementation at github.com/ericstrobl/RCIT. We recommend that users install Microsoft R Open for fast matrix computations. }.
\end{abstract}

\begin{keyword}
Conditional Independence Test \sep Non-parametric \sep Causal Discovery
\end{keyword}

\end{frontmatter}

\section{The Problem}

Constraint-based causal discovery (CCD) algorithms such as PC and FCI infer causal relations from observational data by combining the results of many conditional independence (CI) tests \citep{Spirtes00}. In practice, a CCD algorithm can easily request p-values from thousands of CI tests even with a sparse underlying graph. Developing fast and accurate CI tests is therefore critical for maximizing the usability of CCD algorithms across a wide variety of datasets.

Investigators have developed many fast parametric methods for testing CI. For example, we can use partial correlation to test for CI under the assumption of Gaussian variables \citep{Fisher15,Fisher21}. We can also consider testing for unconditional independence $X \ci Y| Z = z, \forall z$ when $Z$ is discrete and $\mathbb{P}(Z=z)>0$. The chi-squared test for instance utilizes this strategy when both $X$ and $Y$ are also discrete \citep{Pearson00}. Another permutation-based test generalizes the same strategy even when $X$ and $Y$ are not necessarily discrete \citep{Tsamardinos10}.

Testing for CI in the non-parametric setting generally demands a more sophisticated approach. One strategy involves discretizing continuous conditioning variables $Z$ as $\breve{Z}$ in some optimal fashion and assessing unconditional independence $\forall \breve{Z}=\breve{z}$ \citep{Margaritis05,Huang10}. Discretization however suffers severely from the curse of dimensionality because consistency arguments demand smaller bins with increasing sample size, but the number of cells in the associated contingency table increases exponentially with the conditioning set size. A second method involves measuring the distance between estimates of the conditional densities $f(X|Y,Z)$ and $f(X|Z)$, or their associated characteristic functions, by observing that $f(X|Y,Z)=f(X|Z)$ when $X \ci Y | Z$ \citep{Su07,Su08}. However, the power of these tests also deteriorates quickly with increases in the dimensionality of Z.

Several investigators have since proposed reproducing kernel-based CI tests in order to tame the curse of dimensionality. Indeed, kernel-based methods in general are known for their strong empirical performance in the high dimensional setting. The Kernel Conditional Independence Test (KCIT) for example assesses CI by capitalizing on a characterization of CI in reproducing kernel Hilbert spaces (RKHSs; \citep{Zhang11}). Intuitively, KCIT works by testing for vanishing regression residuals among functions in RKHSs. Another kernel-based CI test called the Permutation Conditional Independence Test (PCIT) reduces CI testing to two-sample kernel-based testing via a carefully chosen permutation found at the solution of a convex optimization problem \citep{Doran14}.

The aforementioned kernel-based CI tests unfortunately suffer from an important drawback: both tests scale at least quadratically with sample size and therefore take too long to return a p-value in the large sample size setting. In particular, KCIT's bottleneck lies in the eigendecomposition as well as the inversion of large kernel matrices \citep{Zhang11}, and PCIT takes too long to solve for its required permutation \citep{Doran14}. As a general rule, it is difficult to develop exact kernel-based methods which scale sub-quadratically with sample size, since the computation of kernel matrices themselves scales at least quadratically.

Many investigators have nonetheless utilized \textit{random Fourier features} in order to quickly approximate kernel methods. For example, Lopez-Paz and colleagues developed an unconditional independence test using statistics obtained from canonical correlation analysis with random Fourier features \citep{LopezPaz13}. Others have analyzed the use of random Fourier features for predictive modeling (e.g., \citep{Rahimi07, Sutherland15}) or dimensionality reduction \citep{LopezPaz14}. In practice, investigators have observed that methods which utilize random Fourier features often scale linearly with sample size and achieve comparable accuracy to exact kernel methods.

In this paper, we also use random Fourier features to design two fast tests called the Randomized Conditional Independence Test (RCIT) and the Randomized conditional Correlation Test (RCoT) which approximate the solution of KCIT. Simulations show that RCIT, RCoT and KCIT have comparable accuracy, but both RCIT and RCoT scale linearly with sample size in practice. As a result, RCIT and RCoT return p-values several orders of magnitude faster than KCIT in the large sample size context. Moreover, experiments demonstrate that the causal structures returned by CCD algorithms using either RCIT, RCoT or KCIT have nearly identical accuracy.

\section{Characterizations of Conditional Independence}

Capital letters $X,Y,Z$ denote sets of random variables with domains $\mathcal{X}, \mathcal{Y}, \mathcal{Z}$, respectively. Consider a measurable, positive definite kernel $k_{\mathcal{X}}$ on $\mathcal{X}$ and denote the corresponding RKHS by $\mathcal{H}_{\mathcal{X}}$. We similarly define $k_{\mathcal{Y}}$, $\mathcal{H}_{\mathcal{Y}}$, $k_{\mathcal{Z}}$, and $\mathcal{H}_{\mathcal{Z}}$. We denote the probability distribution of $X$ as $\mathbb{P}_X$ and the joint probability distribution of $(X,Z)$ as $\mathbb{P}_{XZ}$. Let $L^2_X$ denote the space of square integrable functions of $X$, and $L^2_{XZ}$ that of $(X, Z)$. Here, $L^2_X = \{s(X) \mid \mathbb{E}_X(|s|^2) < \infty \}$ and likewise for $L^2_{XZ}$. Next consider a dataset of $n$ i.i.d. samples drawn according to $\mathbb{P}_{XYZ}$.

We use the notation $X \ci Y | Z$ when $X$ and $Y$ are conditionally independent given $Z$. Perhaps the simplest characterization of CI reads as follows: $X \ci Y | Z$ if and only if $\mathbb{P}_{XY|Z}=\mathbb{P}_{X|Z}\mathbb{P}_{Y|Z}$. Equivalently, we have $\mathbb{P}_{X|YZ}=\mathbb{P}_{X|Z}$ and $\mathbb{P}_{Y|XZ}=\mathbb{P}_{Y|Z}$. 

\subsection{Characterization by RKHSs}
A second characterization of CI is given in terms of the cross-covariance operator $\Sigma_{XY}$ on RKHSs \citep{Fukumizu04}. For the random vector $(X, Y )$ on $\mathcal{X} \times \mathcal{Y}$, we define the cross-covariance operator from $\mathcal{H}_{\mathcal{Y}}$ to $\mathcal{H}_{\mathcal{X}}$ as follows:
\begin{equation}
\langle f, \Sigma_{XY} g \rangle = \mathbb{E}_{XY} [f(X)g(Y)] - \mathbb{E}_X[f(X)]\mathbb{E}_Y [g(Y)]
\end{equation}
for all $f \in \mathcal{H}_{\mathcal{X}}$ and $g \in \mathcal{H}_{\mathcal{Y}}$. We may then define the partial cross-covariance operator of $(X, Y)$ given $Z$ by\footnote{Use the right inverse instead of the inverse, if $\Sigma_{ZZ}$ is not invertible (see Corollary 3 in \citep{Fukumizu04}).}:
\begin{equation} \label{ccc_op}
\Sigma_{XY \cdot Z} = \Sigma_{XY} - \Sigma_{XZ}\Sigma_{ZZ}^{-1} 	\Sigma_{ZY}.
\end{equation}
Notice the similarity of the partial cross-covariance operator to the linear partial cross-covariance matrix (as well as the conditional cross-covariance matrix in the Gaussian case)\footnote{Recall that the partial cross-covariance of $X$ and $Y$ given $Z$ is defined as $\mathbb{E}[(X-\mathbb{E}(X|Z))(Y-\mathbb{E}(Y|Z))]$; in other words, it is equivalent to the cross-covariance of $X$ and $Y$ given $Z$. In contrast, the conditional cross-covariance of $X$ and $Y$ given $Z$ is defined as $\mathbb{E}[(X-\mathbb{E}(X|Z))(Y-\mathbb{E}(Y|Z))|Z]$ (notice the extra conditioning).}. Intuitively, one can interpret the above equation as the partial covariance between $\{f(X), \forall f \in \mathcal{H}_{\mathcal{X}} \}$ and $\{g(Y), \forall g \in \mathcal{H}_{\mathcal{Y}} \}$ given $\{h(Z), \forall h \in \mathcal{H}_{\mathcal{Z}} \}$.

Now if we use characteristic kernels\footnote{A kernel $k_X$ is characteristic if
$\mathbb{E}_{X \sim \mathbb{P}_X} [f(X)] = \mathbb{E}_{X \sim \mathbb{Q}_X}[f(X)], \forall f \in \mathcal{H}_{\mathcal{X}}$ implies $\mathbb{P}_X = \mathbb{Q}_X$, where $\mathbb{P}_X$ and $\mathbb{Q}_X$ are two probability distributions of $X$ \citep{Fukumizu08}. Two examples of characteristic kernels include the Gaussian RBF kernel and the Laplacian kernel. } in \eqref{ccc_op}, then the partial cross-covariance operator is related to the CI relation via the following conclusion:
\begin{proposition} \label{prop_RKHS}
\citep{Fukumizu04,Fukumizu08} Let $\ddot{X} = (X,Z)$ and $k_{\ddot{\mathcal{X}}}=k_{\mathcal{X}} k_{\mathcal{Z}}$. Also let $\mathcal{H}_{\ddot{\mathcal{X}}}$ represent the RKHS corresponding to $k_{\ddot{\mathcal{X}}}$. Assume $\mathbb{E}[k_{\mathcal{X}}(X,X)]<\infty$ and $\mathbb{E}[k_{\mathcal{Y}}(Y,Y)]<\infty$\footnote{This assumption ensures that $\mathcal{H}_{\mathcal{X}} \subset L^2_X$ and $\mathcal{H}_{\mathcal{Y}} \subset L^2_Y$.}. Further assume that $k_{\ddot{\mathcal{X}}}k_{\mathcal{Y}}$ is a characteristic kernel on $(\mathcal{X} \times \mathcal{Y}) \times \mathcal{Z}$, and that $\mathcal{H}_{\mathcal{Z}} + \mathbb{R}$ (the direct sum of the two RKHSs) is dense in $L^2_Z$. Then
\begin{equation}
\Sigma_{\ddot{X}Y \cdot Z} = 0 \iff X \ci Y |Z.
\end{equation}

\end{proposition}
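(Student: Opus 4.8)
\noindent The plan is to follow the RKHS cross-covariance argument of \citet{Fukumizu04,Fukumizu08}. Its backbone is the identity, valid for every $f\in\mathcal{H}_{\ddot{\mathcal{X}}}$ and $g\in\mathcal{H}_{\mathcal{Y}}$,
\begin{equation}
\langle f,\Sigma_{\ddot{X}Y\cdot Z}\,g\rangle \;=\; \mathbb{E}\big[\mathrm{Cov}\big(f(\ddot{X}),\,g(Y)\,\big|\,Z\big)\big],
\end{equation}
which in turn rests on the fact that $\Sigma_{ZZ}^{-1}\Sigma_{ZY}$ --- a right inverse where $\Sigma_{ZZ}$ is not invertible --- sends $g$ to the conditional-mean function $z\mapsto\mathbb{E}[g(Y)\mid Z=z]$. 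The assumption that $\mathcal{H}_{\mathcal{Z}}+\mathbb{R}$ is dense in $L^2_Z$ is exactly what makes this conditional expectation reachable within the closure of $\mathcal{H}_{\mathcal{Z}}$, and $\mathbb{E}[k_{\mathcal{X}}(X,X)]<\infty$, $\mathbb{E}[k_{\mathcal{Y}}(Y,Y)]<\infty$ keep all the expectations finite; I would cite this step. Granting the identity, the direction $X\ci Y\mid Z\Rightarrow\Sigma_{\ddot{X}Y\cdot Z}=0$ is immediate: conditionally on $Z=z$ the quantity $f(\ddot{X})=f(X,z)$ is a function of $X$ alone, hence conditionally independent of $g(Y)$, so the inner covariance vanishes almost surely and the right-hand side is $0$ for all $f,g$.

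For the converse, suppose $\Sigma_{\ddot{X}Y\cdot Z}=0$ and exploit the product structure $k_{\ddot{\mathcal{X}}}=k_{\mathcal{X}}k_{\mathcal{Z}}$: the RKHS $\mathcal{H}_{\ddot{\mathcal{X}}}$ contains the dense span of functions $(x,z)\mapsto f(x)h(z)$ with $f\in\mathcal{H}_{\mathcal{X}}$ and $h\in\mathcal{H}_{\mathcal{Z}}$. Substituting such an $f(X)h(Z)$ into the identity and pulling the $Z$-measurable factor out of the conditional covariance gives, for all $f\in\mathcal{H}_{\mathcal{X}}$, $h\in\mathcal{H}_{\mathcal{Z}}$, $g\in\mathcal{H}_{\mathcal{Y}}$,
\begin{equation}
\mathbb{E}\big[f(X)\,h(Z)\,g(Y)\big]\;=\;\mathbb{E}\big[h(Z)\,\mathbb{E}[f(X)\mid Z]\,\mathbb{E}[g(Y)\mid Z]\big].
\end{equation}
An $L^2_Z$-approximation using density of $\mathcal{H}_{\mathcal{Z}}+\mathbb{R}$ extends this to all $h\in\mathcal{H}_{\mathcal{Z}}+\mathbb{R}$, and taking spans over the three RKHSs shows that the true law $\mathbb{P}_{XYZ}$ and the coupling $\mathbb{Q}:=\mathbb{P}_{X\mid Z}\,\mathbb{P}_{Y\mid Z}\,\mathbb{P}_{Z}$ --- the law under which $X$ and $Y$ are conditionally independent given $Z$ while the marginals are unchanged --- assign the same integral to every element of the RKHS of $k_{\ddot{\mathcal{X}}}k_{\mathcal{Y}}=k_{\mathcal{X}}k_{\mathcal{Z}}k_{\mathcal{Y}}$. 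Since this kernel is assumed characteristic on $(\mathcal{X}\times\mathcal{Y})\times\mathcal{Z}$, the two kernel mean embeddings coincide, so $\mathbb{P}_{XYZ}=\mathbb{Q}$, i.e.\ $\mathbb{P}_{XY\mid Z}=\mathbb{P}_{X\mid Z}\mathbb{P}_{Y\mid Z}$ almost surely, which is $X\ci Y\mid Z$.

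The crux --- and the reason the statement uses $\ddot{X}=(X,Z)$ instead of $X$, and requires $\mathcal{H}_{\mathcal{Z}}+\mathbb{R}$ dense in $L^2_Z$ rather than merely rich --- is the passage from an integrated identity to a pointwise one: on its own $\Sigma_{XY\cdot Z}=0$ only delivers $\mathbb{E}_Z[\mathrm{Cov}(f(X),g(Y)\mid Z)]=0$, an average over $Z$ that is strictly weaker than conditional independence, and it is only the freedom to slip an arbitrary $Z$-factor $h(Z)$ inside the expectation (which membership in $\mathcal{H}_{\ddot{\mathcal{X}}}$ now permits) and then let $h$ range over a dense subset of $L^2_Z$ that recovers the pointwise, and hence distributional, conclusion. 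The genuinely delicate point is making this $L^2_Z$ approximation mesh with the integrability guaranteed only by the finite-moment hypotheses (rather than by boundedness of the kernels); the remaining manipulations with cross-covariance operators and kernel mean embeddings are routine.
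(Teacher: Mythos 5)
The paper itself offers no proof of this proposition --- it is imported verbatim, with citations, from Fukumizu et al.\ (2004, 2008) --- so there is no in-paper argument to compare against; your sketch is in essence a reconstruction of the cited works' proof and it is sound: the total-covariance identity $\langle f,\Sigma_{\ddot{X}Y\cdot Z}g\rangle=\mathbb{E}[\mathrm{Cov}(f(\ddot{X}),g(Y)\mid Z)]$ under the density assumption, the product structure $k_{\ddot{\mathcal{X}}}=k_{\mathcal{X}}k_{\mathcal{Z}}$ used to insert an arbitrary $h(Z)$, and the characteristic-kernel step identifying $\mathbb{P}_{XYZ}$ with $\mathbb{P}_{X\mid Z}\mathbb{P}_{Y\mid Z}\mathbb{P}_{Z}$ are exactly the ingredients of the original argument, and you correctly isolate why $\ddot{X}=(X,Z)$ rather than $X$ is what upgrades the averaged statement to genuine conditional independence. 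Two minor blemishes, neither fatal: the claimed ``extension to all $h\in\mathcal{H}_{\mathcal{Z}}+\mathbb{R}$'' is both unjustified as stated (density of $\mathcal{H}_{\mathcal{Z}}+\mathbb{R}$ in $L^2_Z$ does not handle the constant component, which does not arise from a product function in $\mathcal{H}_{\ddot{\mathcal{X}}}$) and unnecessary, since taking $f,h,g$ to be the kernel sections $k_{\mathcal{X}}(\cdot,x),k_{\mathcal{Z}}(\cdot,z),k_{\mathcal{Y}}(\cdot,y)$ already equates the two mean embeddings; and finiteness of $\mathbb{E}[k_{\mathcal{Z}}(Z,Z)]$ is tacitly used for the operators and embeddings involving $Z$ but is not among the stated hypotheses.
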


\subsection{Characterization by $L^2$ spaces}
We also consider a different characterization of CI which enforces the uncorrelatedness of functions in
suitable spaces; this definition is intuitively more appealing. In particular, consider the following constrained $L^2$ spaces:
\begin{equation}
\begin{aligned}
&\mathcal{F}_{XZ} \triangleq \{
f \in L^2_{XZ} \mid \mathbb{E}(f|Z) = 0 \}, \\
&\mathcal{F}_{YZ} \triangleq \{
g \in L^2_{YZ} \mid \mathbb{E}(g|Z) = 0 \}, \\
&\mathcal{F}_{Y \cdot Z} \triangleq \{
h \mid h=h^{\prime}(Y) - \mathbb{E}(h^{\prime}|Z) , h^{\prime} \in L_Y^2 \}.
\end{aligned}
\end{equation}
We then have the following result:
\begin{proposition} \label{prop_L2} 
\citep{Daudin80} The following conditions are equivalent:
\begin{enumerate}
\item $X \ci Y|Z$,
\item $\mathbb{E}(fg) = 0, \forall f \in \mathcal{F}_{XZ} \text{ \textnormal{and} } \forall g \in \mathcal{F}_{YZ}$,
\item $\mathbb{E}(f g^{\prime}) = 0, \forall f \in \mathcal{F}_{XZ} \text{ \textnormal{and} } \forall g^{\prime} \in L^2_{YZ}$,
\item $\mathbb{E}(f h) = 0, \forall f \in \mathcal{F}_{XZ} \text{ \textnormal{and} } \forall h \in \mathcal{F}_{Y \cdot Z}$,
\item $\mathbb{E}(f h^{\prime}) = 0, \forall f \in \mathcal{F}_{XZ} \text{ \textnormal{and} } \forall h^{\prime} \in L_{Y}^2$.
\end{enumerate}
\end{proposition}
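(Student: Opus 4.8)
The plan is to prove the five conditions equivalent via the cycle $(1)\Rightarrow(3)\Rightarrow(2)\Rightarrow(4)\Rightarrow(5)\Rightarrow(1)$. Two of these arrows are free set inclusions. First, viewing any $h'\in L^2_Y$ as a function of $(Y,Z)$ shows $L^2_Y\subseteq L^2_{YZ}$; second, for $h=h'(Y)-\mathbb{E}(h'|Z)$ with $h'\in L^2_Y$ one has $\mathbb{E}(h|Z)=0$ and, by conditional Jensen ($\|\mathbb{E}(h'|Z)\|_2\le\|h'\|_2$), $h\in L^2_{YZ}$, so $h\in\mathcal{F}_{YZ}$. Hence $\mathcal{F}_{Y\cdot Z}\subseteq\mathcal{F}_{YZ}\subseteq L^2_{YZ}$, and since conditions (3), (2), (4) test the same $f\in\mathcal{F}_{XZ}$ against successively smaller families of functions of $(Y,Z)$, we get $(3)\Rightarrow(2)\Rightarrow(4)$ immediately. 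The substantive arrows are $(1)\Rightarrow(3)$, $(4)\Rightarrow(5)$, and $(5)\Rightarrow(1)$.

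For $(1)\Rightarrow(3)$ I would invoke the tower property. Given $f\in\mathcal{F}_{XZ}$ and $g'\in L^2_{YZ}$, Cauchy--Schwarz gives $fg'\in L^1$, and $\mathbb{E}(fg')=\mathbb{E}(\mathbb{E}(fg'|Z))$. Conditionally on $Z$, the regular conditional law factors as $\mathbb{P}_{X|Z}\otimes\mathbb{P}_{Y|Z}$ by $X\ci Y|Z$, so $f$ (a function of $X$ with $Z$ held fixed) and $g'$ (a function of $Y$ with $Z$ held fixed) are conditionally uncorrelated: $\mathbb{E}(fg'|Z)=\mathbb{E}(f|Z)\,\mathbb{E}(g'|Z)=0$ since $\mathbb{E}(f|Z)=0$. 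For $(4)\Rightarrow(5)$, decompose $h'=h+\mathbb{E}(h'|Z)$ with $h\triangleq h'(Y)-\mathbb{E}(h'|Z)\in\mathcal{F}_{Y\cdot Z}$; then $\mathbb{E}(fh')=\mathbb{E}(fh)+\mathbb{E}(f\,\mathbb{E}(h'|Z))$, the first term vanishing by (4) and the second equalling $\mathbb{E}(\mathbb{E}(h'|Z)\,\mathbb{E}(f|Z))=0$ after conditioning on $Z$.

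The main obstacle is $(5)\Rightarrow(1)$: here one must promote mere uncorrelatedness of functions of $X$ (centered given $Z$) against functions of $Y$ into genuine conditional independence. The device I would use is to smuggle an arbitrary bounded weight $c(Z)$ inside $f$ while staying in $\mathcal{F}_{XZ}$. Fix bounded measurable $a$ on $\mathcal{X}$, $b$ on $\mathcal{Y}$, $c$ on $\mathcal{Z}$, and put $f\triangleq c(Z)\big(a(X)-\mathbb{E}(a(X)|Z)\big)$. Boundedness yields $f\in L^2_{XZ}$, and $\mathbb{E}(f|Z)=c(Z)\big(\mathbb{E}(a(X)|Z)-\mathbb{E}(a(X)|Z)\big)=0$, so $f\in\mathcal{F}_{XZ}$. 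Applying (5) with this $f$ and $h'=b$, then expanding and conditioning on $Z$, gives
\begin{equation*}
\mathbb{E}\Big(c(Z)\big[\mathbb{E}(a(X)b(Y)|Z)-\mathbb{E}(a(X)|Z)\,\mathbb{E}(b(Y)|Z)\big]\Big)=0 .
\end{equation*}
Since $c$ ranges over all bounded functions of $Z$, the bracketed $\sigma(Z)$-measurable factor is zero almost surely; and since $a,b$ range over all bounded functions, this is precisely the bounded-test-function characterization of $X\ci Y|Z$. The remaining points are purely measure-theoretic: the reduction to bounded $a,b,c$ (dense in the relevant $L^2$ spaces, with limits controlled by Cauchy--Schwarz in $L^1$), the use of a regular conditional distribution in $(1)\Rightarrow(3)$, and the elementary fact that a $\sigma(Z)$-measurable $L^1$ function orthogonal to every bounded $c(Z)$ must vanish a.s.
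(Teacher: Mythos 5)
Your proof is correct. Note first that the paper does not actually prove this proposition: it is imported by citation from Daudin (1980), whose original treatment phrases everything in terms of projections (conditional expectations as orthogonal projections) in $L^2$ spaces and orthogonality of the resulting subspaces. Your argument is a self-contained, elementary alternative that reaches the same conclusion: the inclusions $\mathcal{F}_{Y\cdot Z}\subseteq\mathcal{F}_{YZ}\subseteq L^2_{YZ}$ give $(3)\Rightarrow(2)\Rightarrow(4)$ for free; $(1)\Rightarrow(3)$ is the standard conditional factorization $\mathbb{E}(fg'|Z)=\mathbb{E}(f|Z)\mathbb{E}(g'|Z)$ (which, as you note, needs either regular conditional distributions or a monotone-class step starting from indicators, but this is routine); $(4)\Rightarrow(5)$ via the decomposition $h'=h+\mathbb{E}(h'|Z)$ is exactly right, the cross term dying because $\mathbb{E}(f|Z)=0$; and the only genuinely delicate implication, $(5)\Rightarrow(1)$, is handled correctly by the weighted test function $f=c(Z)\bigl(a(X)-\mathbb{E}(a(X)|Z)\bigr)$, which stays inside $\mathcal{F}_{XZ}$ and, as $c$ ranges over bounded functions of $Z$, forces $\mathbb{E}(a(X)b(Y)|Z)=\mathbb{E}(a(X)|Z)\,\mathbb{E}(b(Y)|Z)$ a.s.; with $a,b$ indicators this is precisely the definition of $X\ci Y|Z$. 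What your route buys is transparency and minimal machinery (bounded test functions, Cauchy--Schwarz, tower property); what Daudin's projection formulation buys is a cleaner geometric picture of the spaces $\mathcal{F}_{XZ}$, $\mathcal{F}_{YZ}$, $\mathcal{F}_{Y\cdot Z}$ as orthogonal complements, which is the viewpoint the paper implicitly leans on when it compares claim 4 with the RKHS characterization in Proposition \ref{prop_RKHS}.
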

\noindent The second condition means that any ``residual'' function of $(X, Z)$ given $Z$ is uncorrelated with that of $(Y, Z)$ given $Z$. The equivalence also represents a generalization of the case when $(X, Y, Z)$ is jointly Gaussian; here, $X \ci Y | Z$ if and only if any residual function of $X$ given $Z$ is uncorrelated with that of $Y$ given $Z$; i.e., the linear partial correlation coefficient $\rho_{XY \cdot Z}$ is zero.

We also encourage the reader to observe the close relationship between Proposition \ref{prop_RKHS} and claim 4 of Proposition \ref{prop_L2}. Here, we have almost equivalent statements, but Proposition \ref{prop_RKHS} only considers functions in RKHSs, while claim 4 of Proposition \ref{prop_L2} considers functions in $L^2$ spaces. We find Proposition \ref{prop_RKHS} more useful than claim 4 of Proposition \ref{prop_L2} because the RKHS of a characteristic kernel might be much smaller than the corresponding $L^2$ space. 

\section{Test Statistic \& its Asymptotic Distribution}

We consider the following hypotheses:
\begin{equation} \label{hypo}
\begin{aligned}
&H_0: X \ci Y | Z, \\
&H_1: X \not \ci Y | Z.
\end{aligned}
\end{equation}

Now KCIT uses an empirical estimate of the squared Hilbert-Schmidt norm of the partial cross-covariance operator as a statistic to determine whether to reject $H_0$:
\begin{equation} \label{KCIT_stat}
\mathcal{S}_{K} = n \| \Sigma_{\ddot{X}Y \cdot Z} \|_{\widehat{HS}}^2.
\end{equation}
Here, $\| \Sigma_{\ddot{X}Y \cdot Z} \|_{\widehat{HS}}^2$ denotes an empirical estimate of $\| \Sigma_{\ddot{X}Y \cdot Z} \|_{HS}^2$, which we can compute using centered kernel matrices (see Theorem 4 and Proposition 5 of \citep{Zhang11} for details). We can justify $\mathcal{S}_{K}$ as a measure of CI due to Proposition \ref{prop_RKHS}. We may thus equivalently rewrite the null and alternative in \ref{hypo} more explicitly as follows:
\begin{equation} \label{hypo1}
\begin{aligned}
&H_0: \| \Sigma_{\ddot{X}Y \cdot Z} \|_{HS}^2 =0, \\
&H_1: \| \Sigma_{\ddot{X}Y \cdot Z} \|_{HS}^2 > 0.
\end{aligned}
\end{equation}

In this report, we will also take advantage of the characterization of CI presented in Proposition \ref{prop_RKHS}. Recall that the Frobenius norm corresponds to the Hilbert-Schmidt norm in Euclidean space. We therefore consider the squared Frobenius norm of the empirical partial cross-covariance matrix as an approximation of \ref{KCIT_stat} for RCIT:
\begin{equation}
\mathcal{S} = n\| \widehat{\Sigma}_{\ddot{A}B \cdot C} \|_F^2,
\end{equation}
where  $\widehat{\Sigma}_{\ddot{A}B \cdot C} = \frac{1}{n-1} \sum_{i=1}^n[(\ddot{A}_i-\widehat{\mathbb{E}}(\ddot{A}|C))(B_i-\widehat{\mathbb{E}}(B|C))]$ resembles the empirical cross-covariance matrix. We also have $\ddot{A} = f^{\prime}(\ddot{X}) \triangleq \{f_1^{\prime}(\ddot{X}),$ $\dots,f_m^{\prime}(\ddot{X}) \}$ with $f_j^{\prime}(\ddot{X}) \in \mathcal{G}_{\ddot{{\mathcal{X}}}}, \forall j$. Similarly, $B = h^{\prime}(Y) \triangleq \{h_1^{\prime}(Y),$ $\dots,h_q^{\prime}(Y) \}$ with $h_k^{\prime}(Y) \in \mathcal{G}_{\mathcal{Y}}, \forall k$, and $C = g(Z) \triangleq \{g_1(Z),\dots,g_d(Z) \}$ with $g_l(Z) \in \mathcal{G}_{\mathcal{Z}}, \forall l$. Here, $\mathcal{G}_{\ddot{\mathcal{X}}}$, $\mathcal{G}_{\mathcal{Y}}$, and $\mathcal{G}_{\mathcal{Z}}$ denote three spaces of functions, which we will specify shortly. In other words, we select $m$ functions from $\mathcal{G}_{\ddot{\mathcal{X}}}$, $q$ functions from $\mathcal{G}_{\mathcal{Y}}$, and $d$ functions from $\mathcal{G}_{\mathcal{Z}}$. We henceforth choose to take the following hypotheses as equivalent to those in \ref{hypo1} and \ref{hypo}:
\begin{equation} \label{hypo2}
\begin{aligned}
&H_0: \| \Sigma_{\ddot{A}B \cdot C} \|_F^2 =0, \\
&H_1: \| \Sigma_{\ddot{A}B \cdot C} \|_F^2 > 0.
\end{aligned}
\end{equation}

Now we will compute $\widehat{\Sigma}_{\ddot{A}B \cdot C}$ using $\widehat{\Sigma}_{\ddot{A}B} - \widehat{\Sigma}_{\ddot{A}C}(\widehat{\Sigma}_{CC}+\gamma I)^{-1} \widehat{\Sigma}_{CB}$ similar to \ref{ccc_op}, where $\gamma$ denotes a small ridge parameter; recall that this is equivalent to computing the cross-covariance matrix across the residuals of $\ddot{A}$ and $B$ given $C$ using linear ridge regression. We thus may \textit{not} necessarily have $\widehat{\Sigma}_{\ddot{A}B \cdot C} = \widehat{\Sigma}_{\ddot{A}B} - \widehat{\Sigma}_{\ddot{A}C}(\widehat{\Sigma}_{CC}+\gamma I)^{-1} \widehat{\Sigma}_{CB}$. However, we may have $\widehat{\Sigma}_{\ddot{A}B \cdot C} \approx \widehat{\Sigma}_{\ddot{A}B} - \widehat{\Sigma}_{\ddot{A}C}(\widehat{\Sigma}_{CC}+\gamma I)^{-1} \widehat{\Sigma}_{CB}$, if we choose $\mathcal{G}_{\mathcal{Z}}$ in the right way. We therefore must define the space $\mathcal{G}_{\mathcal{Z}}$ in a sensible manner.

In this report, we will set $\mathcal{G}_{\mathcal{Z}}$ to $\{ \sqrt{2} \text{cos}(W^T Z + B) | W \sim \mathbb{P}_W,$ $B \sim $ \\$\text{Uniform}([0, 2\pi]) \}$ and likewise for $\mathcal{G}_{\ddot{\mathcal{X}}}$ and $\mathcal{G}_{\mathcal{Y}}$. We select these specific spaces because we can use them to approximate continuous shift-invariant kernels\footnote{A kernel $k$ is said to be shift-invariant if and only if, for any $a \in \mathbb{R}^p$, we have $k(x-a,y-a) = k(x,y)$, $\forall (x,y) \in \mathbb{R}^p \times \mathbb{R}^p$.}, such as the Gaussian RBF kernel or the Laplacian kernel, via the following result:
\begin{proposition} \citep{Rahimi07} For a continuous shift-invariant kernel $k(x, y)$ on $\mathbb{R}^p$, we have:
\begin{equation}
k(x, y) = \int_{\mathbb{R}^p} e^{iw^T(x-y)} ~dF_w
= \mathbb{E}[\zeta(x)\zeta(y)],
\end{equation}
where $F_W$ represents the CDF of $\mathbb{P}_W$ and $\zeta(x) =$ $\sqrt{2}\text{\textnormal{cos}}(W^Tx+B)$ with $W \sim \mathbb{P}_W$ and $B \sim \text{\textnormal{Uniform}}([0, 2\pi])$.
\end{proposition}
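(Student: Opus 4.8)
The plan is to split the assertion into its two stated equalities and verify each in turn. For the first equality, $k(x,y) = \int_{\mathbb{R}^p} e^{iw^T(x-y)}\,dF_w$, I would invoke Bochner's theorem directly. Writing the shift-invariant kernel as $k(x,y) = \psi(x-y)$ with $\psi$ continuous and positive definite, Bochner's theorem states that $\psi$ is the Fourier transform of a finite non-negative Borel measure $\mu$ on $\mathbb{R}^p$. Rescaling so that $k(x,x) = \psi(0) = 1$ (this costs no generality, since any positive multiplicative constant can be absorbed into the feature map) turns $\mu$ into a probability measure, which we name $\mathbb{P}_W$ and whose CDF is $F_W$. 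This yields $\psi(x-y) = \int_{\mathbb{R}^p} e^{iw^T(x-y)}\,d\mathbb{P}_W(w)$, i.e., the first equality; I would cite Bochner rather than reprove it.

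Next I would pass from the complex exponential to a real cosine. Since $k$ is real valued, $\psi(x-y)$ equals its own real part, so $\psi(x-y) = \int_{\mathbb{R}^p} \cos\!\big(w^T(x-y)\big)\,d\mathbb{P}_W(w) = \mathbb{E}_W\!\big[\cos(W^T(x-y))\big]$; equivalently, realness of $k$ forces $\mathbb{P}_W$ to be symmetric about the origin, so the imaginary part vanishes termwise. The only genuine computation is the phase-randomization identity: for fixed $w$, the product-to-sum formula gives $\zeta(x)\zeta(y) = 2\cos(w^Tx+B)\cos(w^Ty+B) = \cos\!\big(w^T(x-y)\big) + \cos\!\big(w^T(x+y)+2B\big)$, and since $\int_0^{2\pi}\cos(c+2b)\,db = 0$ for every constant $c$, taking the expectation over $B \sim \mathrm{Uniform}([0,2\pi])$ leaves $\mathbb{E}_B[\zeta(x)\zeta(y)] = \cos\!\big(w^T(x-y)\big)$.

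Finally I would combine the pieces by taking the joint expectation over $(W,B)$ with $W$ independent of $B$, invoking Fubini (legitimate because the integrand is bounded by $2$): $\mathbb{E}_{W,B}[\zeta(x)\zeta(y)] = \mathbb{E}_W\big[\mathbb{E}_B[\zeta(x)\zeta(y)]\big] = \mathbb{E}_W\!\big[\cos(W^T(x-y))\big] = \psi(x-y) = k(x,y)$, which is the second equality and finishes the argument.

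I do not expect a hard step once Bochner's theorem is granted; the obstacles are purely bookkeeping: (i) the normalization $\psi(0)=1$ needed to make the spectral measure a probability distribution, (ii) the justification that the imaginary part drops out (symmetry of $\mathbb{P}_W$, or simply that the left-hand side is real), and (iii) verifying the Fubini hypotheses, all immediate from boundedness of $\zeta$. A fully self-contained treatment would instead have to prove Bochner's theorem, but for the purposes of this paper citing it is the appropriate choice.
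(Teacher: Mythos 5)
The paper itself offers no proof of this proposition---it is imported verbatim from Rahimi and Recht (2007)---and your argument is correct and is essentially the standard one given in that reference: Bochner's theorem for the spectral representation, realness of $k$ to pass to the cosine, and the phase-randomization identity $\mathbb{E}_B\big[2\cos(w^Tx+B)\cos(w^Ty+B)\big]=\cos\big(w^T(x-y)\big)$, combined via Fubini. Your bookkeeping points are the right ones to flag; in particular the normalization $\psi(0)=1$ (so that the spectral measure is the probability law $\mathbb{P}_W$) is tacitly assumed in the proposition as stated, and you handle it correctly.
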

\noindent The precise form of $\mathbb{P}_W$ depends on the type of shift-invariant kernel one would like to approximate (see Figure 1 of \citep{Rahimi07} for a list). Since KCIT uses the Gaussian RBF kernel, we choose to approximate the Gaussian RBF kernel by setting $\mathbb{P}_W$ to a Gaussian.

Now let $f_j = f_j^{\prime} - \mathbb{E}(f_j^{\prime}|Z)$. Then $\mathbb{E}( f_j | Z )=0$, so $f_j \in \mathcal{F}_{XZ}$. Moreover, $h_k^{\prime} - \mathbb{E}(h_k^{\prime}|Z) \in \mathcal{F}_{Y  \cdot Z}$. Note that we can estimate $\mathbb{E}(f_j^{\prime}|Z)$ with the linear ridge regression solution $\widehat{u}_j^T g(Z)$ under mild conditions because we can guarantee that $\mathbb{P}\big[ | \widehat{\mathbb{E}}_R(f_j^{\prime}|Z) - \widehat{u}_j^T g(Z) | \geq \varepsilon \big] \rightarrow 0$ for any fixed $\varepsilon >0$, where $\widehat{\mathbb{E}}_R(f_j^{\prime}|Z)$ denotes the estimate of $\mathbb{E}(f_j^{\prime}|Z)$ by kernel ridge regression; this holds so long as we choose $d$ large enough for $g(Z)$ (see Section 3.1 of \citep{Sutherland15}; the argument is complex and beyond the scope of this paper). We can also estimate $\mathbb{E}(h_k^{\prime}|Z)$ with $\widehat{u}_k^T g(Z)$, because we can similarly guarantee that $\mathbb{P}\big[ | \widehat{\mathbb{E}}_R(h_k^{\prime}|Z) - \widehat{u}_k^T g(Z) | \geq \varepsilon \big] \rightarrow 0$ for any fixed $\varepsilon>0$. 

We can therefore consider the following spaces for $\mathcal{S}$ which are similar to the $L^2$ spaces used in claim 4 of Proposition \ref{prop_L2}:
\begin{equation}
\begin{aligned}
&\widehat{\mathcal{G}}_{\ddot{X}} \triangleq \big\{
f \mid f_j = f_j^{\prime} - \mathbb{E}(f_j^{\prime}|Z), f_j^{\prime} \in \mathcal{G}_{\ddot{{\mathcal{X}}}} \big\}, \\
&\widehat{\mathcal{G}}_{Y \cdot Z} \triangleq \big\{
h \mid h_k = h_k^{\prime}-\mathbb{E}(h_k^{\prime}|Z),  h_k^{\prime} \in \mathcal{G}_{\mathcal{Y}} \big\}.
\end{aligned}
\end{equation}
\noindent We then approximate CI with $\mathcal{S}$ in the following sense:
\begin{enumerate}
\item We always have $X \ci Y|Z \implies \mathbb{E}(fh) = 0, \forall f \in \widehat{\mathcal{G}}_{\ddot{X}} \text{ \textnormal{and} } \forall h \in \widehat{\mathcal{G}}_{Y  \cdot Z}$.
\item The reverse direction will hold for an increasing number of distributions as $m,q$ increase.
\end{enumerate}
\noindent In practice, we find that the second point holds in all of the cases we tested with only $m,q=5$.

\subsection{Null Distribution}
We now consider the asymptotic distribution of $\mathcal{S}$ under the null. 

\begin{theorem} \label{thm_asymp}
Consider $n$ i.i.d. samples from $\mathbb{P}_{XYZ}$. We then have the following asymptotic distribution under the null in \ref{hypo2}:
\begin{equation}
n\| \widehat{\Sigma}_{\ddot{A}B  \cdot C} \|_F^2 \stackrel{d}{\rightarrow} \sum_{i=1}^{L} \lambda_i z_i^2,
\end{equation}
where $\{z_1,\dots,z_L\}$ denotes i.i.d. standard Gaussian variables (thus $\{z_1^2,\dots,z_L^2\}$ denotes i.i.d $\chi_1^2$ variables), $L$ the number of elements in $\widehat{\Sigma}_{\ddot{A}B \cdot C}$, and $\lambda$ the eigenvalues of the covariance matrix $\Pi$, which we assume to be positive definite; the matrix $\Pi$ is more specifically the covariance matrix of the vectorization of $(\ddot{A} - \mathbb{E}(\ddot{A}|C)) (B - \mathbb{E}(B|C))^T$. We may denote an arbitrary entry in $\Pi$ as follows:
\begin{equation} \label{cov_chi2}
\begin{aligned}
&\hspace{4.5mm} \Pi_{\ddot{A}_iB_j,\ddot{A}_kB_l} \\
&=\mathbb{E} \big[ (\ddot{A}_i - \mathbb{E}(\ddot{A}_i|C))(B_j - \mathbb{E}(B_j|C))(\ddot{A}_k - \mathbb{E}(A_k|C))(B_l - \mathbb{E}(B_l|C) )\big].
\end{aligned}
\end{equation}
\end{theorem}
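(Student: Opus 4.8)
The plan is to rewrite $n\|\widehat{\Sigma}_{\ddot A B\cdot C}\|_F^2$ as the squared Euclidean norm of a $\sqrt n$-rescaled mean of i.i.d.\ mean-zero vectors, invoke the multivariate central limit theorem, and then recognize the resulting Gaussian quadratic form as the stated weighted sum of independent $\chi^2_1$ variables. Throughout I read ``$\mathbb{E}(\cdot \mid C)$'' as the population linear (ridge, with $\gamma \to 0$) predictor on the features $C = g(Z)$; by the approximation recalled just before the theorem (Section~3.1 of \citep{Sutherland15}) this coincides asymptotically with the true conditional expectation once $d$ is large, while the link between the null \eqref{hypo2} and genuine CI is claim~4 of Proposition~\ref{prop_L2} together with point~1 of the preceding discussion.

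First I would introduce an oracle statistic. Put $a_i \triangleq \ddot A_i - \mathbb{E}(\ddot A \mid C_i) \in \mathbb{R}^m$, $b_i \triangleq B_i - \mathbb{E}(B \mid C_i) \in \mathbb{R}^q$, and $\xi_i \triangleq \mathrm{vec}(a_i b_i^T) \in \mathbb{R}^L$ with $L = mq$. Since the random Fourier features satisfy $|\sqrt 2 \cos(\cdot)| \le \sqrt 2$ and the population predictor is a fixed linear map, each $\xi_i$ is bounded, so the $\xi_i$ are i.i.d.\ with finite moments of every order. Under $H_0$ in \eqref{hypo2} we have $\mathbb{E}[\xi_i] = \mathrm{vec}(\Sigma_{\ddot A B\cdot C}) = 0$, hence $\mathrm{Cov}(\xi_i) = \mathbb{E}[\xi_i \xi_i^T] = \Pi$, with entries exactly \eqref{cov_chi2}. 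Let $\widetilde\Sigma \triangleq \frac1n \sum_{i=1}^n a_i b_i^T$ denote the oracle estimator, which uses the population predictors in place of the fitted ones.

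Next I would apply the Lindeberg--L\'evy CLT to obtain $\sqrt n\, \mathrm{vec}(\widetilde\Sigma) = \frac1{\sqrt n}\sum_{i=1}^n \xi_i \stackrel{d}{\rightarrow} N(0,\Pi)$, and then the continuous mapping theorem with $v \mapsto \|v\|_2^2$ to get $n \|\widetilde\Sigma\|_F^2 = \|\sqrt n\, \mathrm{vec}(\widetilde\Sigma)\|_2^2 \stackrel{d}{\rightarrow} \|W\|_2^2$ for $W \sim N(0,\Pi)$. Diagonalizing $\Pi = U\, \mathrm{diag}(\lambda_1,\dots,\lambda_L)\, U^T$ with $U$ orthogonal gives $U^T W \sim N(0, \mathrm{diag}(\lambda))$, so $\|W\|_2^2 = \sum_{i=1}^L \lambda_i z_i^2$ with $z_1,\dots,z_L$ i.i.d.\ $N(0,1)$; note that positive definiteness of $\Pi$ is not needed for this step, only $\lambda_i \ge 0$.

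The hard part is the remaining step: showing the fitted conditional expectations can be swapped for the population ones without changing the limit, i.e.\ $\sqrt n\, \|\mathrm{vec}(\widehat\Sigma_{\ddot A B\cdot C}) - \mathrm{vec}(\widetilde\Sigma)\|_2 \stackrel{p}{\rightarrow} 0$, after which Slutsky's lemma finishes the proof. Writing the fitted-minus-population predictor as $\delta^A_i = (\widehat\beta_A - \beta_A^{\ast})^T C_i$ and likewise $\delta^B_i$, I would expand $\widehat\Sigma_{\ddot A B\cdot C} = \frac1{n-1}\sum_i (a_i - \delta^A_i)(b_i - \delta^B_i)^T$ into $\widetilde\Sigma$ plus three cross terms, $\widehat M_{aC}(\widehat\beta_B - \beta_B^{\ast})$, $(\widehat\beta_A - \beta_A^{\ast})^T \widehat M_{Cb}$, and $(\widehat\beta_A - \beta_A^{\ast})^T \widehat M_{CC}(\widehat\beta_B - \beta_B^{\ast})$, up to $O_p(n^{-1})$ terms from the finite-sample normalization and empirical centering, where $\widehat M_{aC}, \widehat M_{Cb}, \widehat M_{CC}$ are the corresponding empirical cross-moments. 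I would then combine: (i) $\sqrt n$-consistency of the ridge coefficients, $\widehat\beta_A - \beta_A^{\ast}, \widehat\beta_B - \beta_B^{\ast} = O_p(n^{-1/2})$, as smooth functions of $\sqrt n$-consistent sample moments; (ii) $\widehat M_{CC} = O_p(1)$; and (iii) the normal-equation orthogonality of the population residuals to the regressors, $\mathbb{E}[a C^T] = 0 = \mathbb{E}[C b^T]$ (exact when the ridge penalty vanishes, $O(\gamma)$ otherwise), which forces $\widehat M_{aC}, \widehat M_{Cb} = o_p(1)$. Multiplying these rates makes every cross term $o_p(n^{-1/2})$, and the normalization corrections are negligible as well, so the reduction holds. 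The genuinely delicate ingredient is that three approximations are in play simultaneously --- finitely many random features for $Z$, linear ridge regression standing in for kernel ridge regression, and kernel ridge regression standing in for the true conditional expectation --- so I would lean on the cited analysis rather than re-derive that the population objects defining $\Pi$ are exactly the limits of the fitted residuals; alternatively, if one simply \emph{defines} $a_i, b_i$ (hence $\Pi$) through the population linear-ridge predictors, this subtlety vanishes and the argument becomes self-contained, at the price of reading \eqref{hypo2} as a statement about those predictors rather than about true conditional expectations.
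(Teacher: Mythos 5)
Your proof is correct and reaches the paper's conclusion, but the middle step is handled by a genuinely different (and more explicit) argument than the one the paper gives. The skeleton is shared: vectorize, establish $\sqrt{n}\,v(\widehat{\Sigma}_{\ddot{A}B\cdot C})\stackrel{d}{\rightarrow}\mathcal{N}(0,\Pi)$ under the null, then diagonalize $\Pi$ to express the quadratic form as $\sum_i \lambda_i z_i^2$ (the paper does this last step exactly as you do). Where you differ is in how the asymptotic normality is obtained: the paper appeals to its Appendix Lemma (a CLT for the sample covariance matrix with estimated mean) ``combined with the continuous mapping theorem and the null,'' i.e.\ it implicitly treats $\widehat{\Sigma}_{\ddot{A}B}-\widehat{\Sigma}_{\ddot{A}C}(\widehat{\Sigma}_{CC}+\gamma I)^{-1}\widehat{\Sigma}_{CB}$ as a smooth functional of the joint sample covariance and reads off the limit --- a step that, to preserve the $\sqrt{n}$ scaling and identify $\Pi$, is really a delta-method computation the paper does not spell out (its appendix lemma only handles the estimated mean, not the estimated ridge coefficients). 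You instead apply the i.i.d.\ CLT directly to the oracle residual products $\xi_i=\mathrm{vec}(a_ib_i^T)$ and then prove the plug-in reduction by expanding the fitted-minus-population predictors, using $\sqrt{n}$-consistency of the ridge coefficients and population orthogonality of residuals to regressors to kill the cross terms at the $o_p(n^{-1/2})$ level; this is effectively the influence-function calculation that the paper's ``continuous mapping'' gloss hides, and it makes transparent why $\Pi$ has exactly the fourth-moment entries in \eqref{cov_chi2}. Your approach buys rigor and an explicit identification of the limiting covariance at the cost of length; the paper's buys brevity by delegating to the appendix lemma and leaving the coefficient-estimation error implicit. Two caveats you already flag are worth keeping: with the ridge parameter $\gamma>0$ held fixed the orthogonality is only $O(\gamma)$, so the cross terms do not vanish exactly at the $\sqrt{n}$ scale (this matches the paper's own admission that $\widehat{\Sigma}_{\ddot{A}B\cdot C}$ only approximately equals the ridge formula), and the identification of the population residuals with true conditional expectations rests on the cited random-feature/kernel-ridge approximation, exactly as in the paper; also, as you note, only positive semidefiniteness of $\Pi$ is needed for the final diagonalization, so your argument is if anything slightly more general than the stated positive-definiteness assumption.
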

\begin{proof}
We may first write:
\begin{equation}
\begin{aligned}
& \hspace{4.5mm} n\| \widehat{\Sigma}_{\ddot{A}B \cdot C} \|_F^2 \\
& = n*\text{tr}(\widehat{\Sigma}_{\ddot{A}B \cdot C}\widehat{\Sigma}_{\ddot{A}B \cdot C}^T) \\
& = n*v(\widehat{\Sigma}_{\ddot{A}B \cdot C})^T v(\widehat{\Sigma}_{\ddot{A}B \cdot C}), \\
& = \big[\sqrt{n}v(\widehat{\Sigma}_{\ddot{A}B \cdot C})\big]^T \big[\sqrt{n}v(\widehat{\Sigma}_{\ddot{A}B \cdot C}) \big],
\end{aligned}
\end{equation}
where $v(\widehat{\Sigma}_{\ddot{A}B \cdot C})$ stands for the vectorization of $\widehat{\Sigma}_{\ddot{A}B \cdot C}$. By CLT of the sample covariance matrix (see Lemma \ref{cov_CLT} in the Appendix) combined with the continuous mapping theorem and the null, we know that $\sqrt{n}v(\widehat{\Sigma}_{\ddot{A}B \cdot C}) \stackrel{d}{\rightarrow} \mathcal{N}(0,\Pi)$. Here, we write an arbitrary entry $\Pi_{\ddot{A}_iB_j,\ddot{A}_kB_l}$ under the null as follows:
\begin{equation} \label{deriv1}
\begin{aligned}
&\hspace{4.5mm} \Pi_{\ddot{A}_iB_j,\ddot{A}_kB_l}\\
&= \text{\normalfont{Cov}} \big[ (\ddot{A}_i - \mathbb{E}(\ddot{A}_i|C))(B_j - \mathbb{E}(B_j|C)),\\
&\hspace{45mm} (\ddot{A}_k - \mathbb{E}(\ddot{A}_k|C))(B_l - \mathbb{E}(B_l|C) ) \big] \\
& = \mathbb{E} \big[ (\ddot{A}_i - \mathbb{E}(\ddot{A}_i|C))(B_j - \mathbb{E}(B_j|C))*\\
&\hspace{45mm} (\ddot{A}_k - \mathbb{E}(\ddot{A}_k|C))(B_l - \mathbb{E}(B_l|C) ) \big].
\end{aligned}
\end{equation}

Now consider the eigendecomposition of $\Pi$ written as $\Pi=E\Lambda E^T$. Then, we have $E^T \big[\sqrt{n}v(\widehat{\Sigma}_{\ddot{A}B \cdot C})\big]\stackrel{d}{\rightarrow} \mathcal{N}(0,\Lambda)$ by the continuous mapping theorem. Note that:
\begin{equation}
\begin{aligned}
& \hspace{4.5mm} \big[\sqrt{n}v(\widehat{\Sigma}_{\ddot{A}B \cdot C})\big]^T \big[\sqrt{n}v(\widehat{\Sigma}_{\ddot{A}B \cdot C}) \big] \\
& = \big(E^T \big[\sqrt{n}v(\widehat{\Sigma}_{\ddot{A}B \cdot C})\big] \big)^T \big(E^T \big[\sqrt{n}v(\widehat{\Sigma}_{\ddot{A}B \cdot C})\big] \big) \\
& \stackrel{d}{\rightarrow} \sum_{i=1}^{L} \lambda_i z_i^2.
\end{aligned}
\end{equation}

\end{proof}

\noindent We conclude that the null distribution of the test statistic is a positively weighted sum of i.i.d. $\chi_1^2$ random variables. Note that we can obtain estimates of the conditional expectations in $\Pi$ by using kernel ridge regressions. We will however not need to perform the kernel ridge regressions directly, because we can approximate the outputs of kernel ridge regressions to within an arbitrary degree of accuracy using linear ridge regressions with enough random Fourier features \citep{Sutherland15}. We can finally obtain an estimate of $\Pi$ by application of the continuous mapping theorem and the weak law of large numbers. For an arbitrary entry in $\Pi$:
\begin{equation}
\begin{aligned}
& \hspace{5mm} \frac{1}{n} \sum_{r=1}^n (\ddot{A}_{i,r} - \widehat{\mathbb{E}}(\ddot{A}_i|C))(B_{j,r} - \widehat{\mathbb{E}}(B_j|C))* \\ & \hspace{45mm} (\ddot{A}_{k,r} - \widehat{\mathbb{E}}(\ddot{A}_k|C))(B_{l,r} - \widehat{\mathbb{E}}(B_l|C))\\
& \stackrel{p}{\rightarrow} \mathbb{E} \big[ (\ddot{A}_i - \mathbb{E}(\ddot{A}_i|C))(B_j - \mathbb{E}(B_j|C))* \\ & \hspace{45mm} (\ddot{A}_k - \mathbb{E}(\ddot{A}_k|C))(B_l - \mathbb{E}(B_l|C) )\big].
\end{aligned}
\end{equation}

Unfortunately, a closed form CDF of a positively weighted sum of chi-squared random variables does not exist in general. We can approximate the CDF by Imhof's method which inverts the characteristic function numerically \citep{Imhof61}. We should consider Imhof's method as exact, since it provides error bounds and can be used to compute the distribution at a fixed point to within a desired precision \citep{Solomon77, Johnson02}. However, Imhof's method is too computationally intensive for our purposes. We can nonetheless utilize several fast methods which approximate the null by moment matching.

\subsection{Approximating the Null Distribution by Moment Matching}
We write the cumulants of a positively weighted sum of i.i.d. $\chi_1^2$ random variables as follows:
\begin{equation}
c_r = 2^{r-1}(r-1)!\sum_{i=1}^L \lambda_i^r,
\end{equation}
where $\lambda = \{\lambda_1, \dots, \lambda_L\}$ denotes the weights. We may for example derive the first three cumulants:
\begin{equation}
m_1 = \sum_{i=1}^L \lambda_i, ~~~ m_2 = 2\sum_{i=1}^L \lambda_i^2, ~~~ m_3 = 8\sum_{i=1}^L \lambda_i^3.
\end{equation}
We then recover the moments from the cumulants as follows:
\begin{equation}
m_r = c_r + \sum_{i=1}^{r-1}{r-1 \choose i-1}c_i m_{r-i}, ~~~r=2,3,\dots
\end{equation}

Now the Satterthwaite-Welch method \citep{Welch38,Satterthwaite46,Fairfield36} represents perhaps the simplest and earliest moment matching method. The method matches the first two moments of the sum with a gamma distribution $\Gamma(\widehat{g},\widehat{\theta})$. Zhang and colleagues adopted a similar strategy in their paper introducing KCIT \citep{Zhang11}. Here, we have:
\begin{equation}
\widehat{g} = \frac{1}{2} c_1^2/c_2, ~~~ \widehat{\theta} = c_2/c_1.
\end{equation}

We however find the above gamma approximation rather crude. We therefore also consider applying more modern methods to estimating the distribution of a sum of positively weighted chi-squares. Improved methods such as the Hall-Buckley-Eagleson \citep{Hall83,Buckley88} and the Wood F \citep{Wood89} methods match the first three moments of the sum to other distributions in a similar fashion. On the other hand, the Lindsay-Pilla-Basak method \citep{Lindsay00} matches the first $2L$ moments to a mixture distribution. 

We will focus on the Lindsay-Pilla-Basak method in this paper, since Bodenham \& Adams have already determined that the Lindsay-Pilla-Basak method performs the best through extensive experimentation \citep{Bodenham16,Bodenham15}. We therefore choose to use the method as the default method for RCIT. Briefly, the method approximates the CDF under the null $F_{\mathcal{H}_0}$ using a finite mixture of $L$ Gamma CDFs $F_{\Gamma(g,\theta_i)}$:
\begin{equation}
F_{\mathcal{H}_0} = \sum_{i=1}^L \pi_i F_{\Gamma(g,\theta_i)},
\end{equation}
where $\pi \geq 0, \sum_{i=1}^L \pi_i =1$, and we seek to determine the $2L+1$ parameters $g$, $\theta_1, \dots, \theta_L$, and $\pi_1, \dots, \pi_L$. The Lindsay-Pilla-Basak method computes these parameters by a specific sequence of steps that makes use of results concerning moment matrices (see Appendix II in \citep{Uspensky37}). The sequence is complicated and beyond the scope of this paper, but we refer the reader to \citep{Lindsay00} for details.

\subsection{Testing for Conditional Un-Correlatedness}

Strictly speaking, we must consider the extended variable set $\ddot{X}$ to test for conditional independence according to Proposition \ref{prop_RKHS}. However, we have two observations: (1) we can substitute a test for non-linear conditional uncorrelatedness with tests for conditional independence in almost all cases encountered in practice because most conditionally dependent variables are correlated after some functional transformations, and (2) using the extended variable set $\ddot{X}$ makes estimating the null distribution more difficult compared to using the unextended variable set $X$. The first observation coincides with the observations of others who have noticed that Fisher's z-test performs well (but not perfectly) in ruling out conditional independencies with non-Gaussian data. We can also justify the first observation with the following result using the cross-covariance operator $\Sigma_{XY \cdot Z}$:
\begin{proposition} \label{prop_RKHS2}
\citep{Fukumizu04,Fukumizu08} Assume $\mathbb{E}[k_{\mathcal{X}}(X,X)]<\infty$ and $\mathbb{E}[k_{\mathcal{Y}}(Y,Y)]<\infty$. Further assume that $k_{\mathcal{X}}k_{\mathcal{Y}}$ is a characteristic kernel on $\mathcal{X} \times \mathcal{Y}$, and that $\mathcal{H}_{\mathcal{Z}} + \mathbb{R}$ (the direct sum of the two RKHSs) is dense in $L^2_Z$. Then
\begin{equation}
\Sigma_{XY \cdot Z} = 0 \iff \mathbb{E}_Z\big[ \mathbb{P}_{XY|Z} \big]=\mathbb{E}_Z\big[ \mathbb{P}_{X|Z}\mathbb{P}_{Y|Z} \big].
\end{equation}

\end{proposition}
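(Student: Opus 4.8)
\medskip
\noindent\emph{Proof plan.} The plan is to reduce the claimed equivalence to a statement about the bilinear form of the partial cross-covariance operator, and then to a statement about two measures on $\mathcal{X}\times\mathcal{Y}$. The engine, which I would take from \citep{Fukumizu04,Fukumizu08}, is the identity
\[
\langle f, \Sigma_{XY \cdot Z}\, g \rangle = \mathbb{E}_{XY}[f(X)g(Y)] - \mathbb{E}_Z\big[\mathbb{E}(f(X)|Z)\,\mathbb{E}(g(Y)|Z)\big], \quad \forall f\in\mathcal{H}_{\mathcal{X}},\ \forall g\in\mathcal{H}_{\mathcal{Y}},
\]
valid precisely under the stated hypotheses: $\mathbb{E}[k_{\mathcal{X}}(X,X)]<\infty$ and $\mathbb{E}[k_{\mathcal{Y}}(Y,Y)]<\infty$ give $\mathcal{H}_{\mathcal{X}}\subset L^2_X$, $\mathcal{H}_{\mathcal{Y}}\subset L^2_Y$ so that both sides are finite and the conditional expectations lie in $L^2_Z$, while the density of $\mathcal{H}_{\mathcal{Z}}+\mathbb{R}$ in $L^2_Z$ is what lets one represent the (generally non-RKHS) maps $z\mapsto\mathbb{E}(f(X)|Z=z)$ as $L^2_Z$-limits of elements $\Sigma_{ZZ}^{-1}\Sigma_{ZX}f$ (right inverse when $\Sigma_{ZZ}$ is not invertible, per the footnote to \eqref{ccc_op}). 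Granting this, $\Sigma_{XY\cdot Z}=0$ is equivalent to the vanishing of the right-hand side above for every $f\in\mathcal{H}_{\mathcal{X}}$ and $g\in\mathcal{H}_{\mathcal{Y}}$.

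Next I would recast that condition measure-theoretically. Set $\mu_1\triangleq\mathbb{E}_Z[\mathbb{P}_{XY|Z}]$ (which is just $\mathbb{P}_{XY}$) and $\mu_2\triangleq\mathbb{E}_Z[\mathbb{P}_{X|Z}\mathbb{P}_{Y|Z}]$, both finite measures on $\mathcal{X}\times\mathcal{Y}$. By the tower property and Fubini on the product measure $\mathbb{P}_{X|Z=z}\otimes\mathbb{P}_{Y|Z=z}$, one has $\mathbb{E}_{XY}[f(X)g(Y)]=\int f\otimes g\, d\mu_1$ and $\mathbb{E}_Z[\mathbb{E}(f(X)|Z)\mathbb{E}(g(Y)|Z)]=\int f\otimes g\, d\mu_2$, so $\Sigma_{XY\cdot Z}=0\iff \int f\otimes g\, d\mu_1=\int f\otimes g\, d\mu_2$ for all $f\in\mathcal{H}_{\mathcal{X}}$, $g\in\mathcal{H}_{\mathcal{Y}}$. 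To upgrade this to equality of the mean embeddings of $\mu_1,\mu_2$ in the RKHS $\mathcal{H}_{\mathcal{X}}\otimes\mathcal{H}_{\mathcal{Y}}$ of $k_{\mathcal{X}}k_{\mathcal{Y}}$, I would use that finite sums of elementary tensors $f\otimes g$ are dense there, that the embedding functionals are continuous, and that both embeddings exist: Cauchy–Schwarz gives $\int\sqrt{k_{\mathcal{X}}(x,x)k_{\mathcal{Y}}(y,y)}\, d\mu_1\le(\mathbb{E}[k_{\mathcal{X}}(X,X)]\,\mathbb{E}[k_{\mathcal{Y}}(Y,Y)])^{1/2}<\infty$, and a Cauchy–Schwarz-over-$Z$ plus conditional-Jensen argument ($\mathbb{E}(\sqrt{k_{\mathcal{X}}(X,X)}|Z)^2\le\mathbb{E}(k_{\mathcal{X}}(X,X)|Z)$) gives the same bound for $\mu_2$.

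Finally, since $k_{\mathcal{X}}k_{\mathcal{Y}}$ is characteristic on $\mathcal{X}\times\mathcal{Y}$, equal mean embeddings force $\mu_1=\mu_2$, i.e. $\mathbb{E}_Z[\mathbb{P}_{XY|Z}]=\mathbb{E}_Z[\mathbb{P}_{X|Z}\mathbb{P}_{Y|Z}]$; conversely $\mu_1=\mu_2$ makes all the integrals agree, hence the right-hand side of the engine identity vanishes, hence $\Sigma_{XY\cdot Z}=0$. This closes both directions. (One can read this as the $X$-only analogue of Proposition \ref{prop_RKHS}, obtained from the same engine; the conclusion is weaker here because $\mathbb{P}_{X|Z}$ no longer pins down $Z$, so one only gets the $Z$-averaged equality rather than $X\ci Y|Z$.)

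I expect the main obstacle to be a clean justification of the engine identity under the mere density hypothesis — specifically, that $z\mapsto\mathbb{E}(f(X)|Z=z)$ need not lie in $\mathcal{H}_{\mathcal{Z}}$, so $\Sigma_{ZZ}^{-1}\Sigma_{ZX}f$ only approximates it, and one needs the density of $\mathcal{H}_{\mathcal{Z}}+\mathbb{R}$ in $L^2_Z$ to make the approximation error vanish in $L^2_Z$ and pass to the limit inside the inner products; I would cite \citep{Fukumizu04,Fukumizu08} for this rather than reprove it, keeping the measure-identification part self-contained. The only other point requiring care is the existence of the mean embedding for $\mu_2$, which is exactly what the stated moment conditions were chosen to guarantee.
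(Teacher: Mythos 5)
The paper does not prove this proposition at all — it is imported by citation from \citep{Fukumizu04,Fukumizu08} — and your sketch correctly reconstructs the argument of those references: the bilinear-form identity $\langle f,\Sigma_{XY\cdot Z}g\rangle=\mathbb{E}_{XY}[f(X)g(Y)]-\mathbb{E}_Z[\mathbb{E}(f|Z)\mathbb{E}(g|Z)]$ justified by the density of $\mathcal{H}_{\mathcal{Z}}+\mathbb{R}$ in $L^2_Z$, the reduction to equality of the mean embeddings of $\mathbb{P}_{XY}$ and $\mathbb{E}_Z[\mathbb{P}_{X|Z}\mathbb{P}_{Y|Z}]$ in the RKHS of the product kernel $k_{\mathcal{X}}k_{\mathcal{Y}}$ (with the moment conditions guaranteeing both embeddings exist, exactly as you bound them), and the characteristic-kernel step forcing the two measures to coincide. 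Since you defer the one genuinely technical ingredient (the bilinear-form identity) to the same sources the paper cites and the remaining steps are sound, your proposal matches the intended proof with no gap.
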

\noindent In other words, we have:
\begin{equation}
\begin{aligned}
& \Sigma_{XY \cdot Z} = 0 \implies \mathbb{P}_{XY}=\int \mathbb{P}_{X|Z}\mathbb{P}_{Y|Z} ~ d\mathbb{P}_{Z}, \\
& \Sigma_{XY \cdot Z} = 0 \impliedby \mathbb{E}_Z\big[ \mathbb{P}_{X|Z}\mathbb{P}_{Y|Z} \big]=\mathbb{E}_Z\big[ \mathbb{P}_{XY|Z} \big] \impliedby X \ci Y |Z.
\end{aligned}
\end{equation}
Notice that $\Sigma_{XY \cdot Z}=0$ is almost equivalent to CI, in the sense that $\Sigma_{XY \cdot Z}=0$ just misses those rather contrived distributions where $\mathbb{P}_{XY}=\int \mathbb{P}_{XY|Z} ~ d\mathbb{P}_{Z}=\int \mathbb{P}_{X|Z}\mathbb{P}_{Y|Z} ~ d\mathbb{P}_Z$ when $X \not \ci Y | Z$. In other words, if $\mathbb{P}_{XY} \not =\int \mathbb{P}_{X|Z}\mathbb{P}_{Y|Z} ~ d\mathbb{P}_{Z}$ when $X \not \ci Y |Z$, then we have $\Sigma_{XY \cdot Z} = 0 \iff \Sigma_{\ddot{X}Y \cdot Z} = 0$ (under the corresponding additional assumptions of Propositions \ref{prop_RKHS} and \ref{prop_RKHS2}).

Let us now consider an example of a situation where $\int \mathbb{P}_{XY|Z} ~ d\mathbb{P}_{Z}$ $\not =$\\ $\int \mathbb{P}_{X|Z}\mathbb{P}_{Y|Z} ~ d\mathbb{P}_{Z}$ when $X \not \ci Y |Z$. Take three binary variables $X,Y,Z \in \{0,1\}$. Let $\mathbb{P}_{Z=0} = 0.2$ and $\mathbb{P}_{Z=1} = 0.8$. Also consider the four probability tables in Table \ref{table_example}.
\begin{table}
\begin{subtable}{0.5\textwidth}
  \centering
\begin{tabular}{ l | c | r}
  ${}$ & $\mathbb{P}_{X|Z=0}$ & $\mathbb{P}_{X|Z=1}$ \\
	\hline
  $X=0$ & 0.5 & 0.3\\
  $X=1$ & 0.5 & 0.7\\
\end{tabular}
\caption{} \label{table1a}
\end{subtable}
\begin{subtable}{0.5\textwidth}
  \centering
\begin{tabular}{ l | c | r}
  ${}$ & $\mathbb{P}_{Y|Z=0}$ & $\mathbb{P}_{Y|Z=1}$ \\
	\hline
  $Y=0$ & 0.3 & 0.4\\
  $Y=1$ & 0.7 & 0.6\\
\end{tabular}
\caption{} \label{table1b}
\end{subtable}
\begin{subtable}{0.5\textwidth}
  \centering
\begin{tabular}{ l | c | r}
  ${}$ & $\mathbb{P}_{XY|Z=0}$ & $\mathbb{P}_{XY|Z=1}$ \\
	\hline
  $X=0, Y=0$ & 0.2 & 0.1075\\
  $X=0, Y=1$ & 0.3 & 0.1925\\
  $X=1, Y=0$ & 0.1 & 0.2925\\
  $X=1, Y=1$ & 0.4 & 0.4075\\
\end{tabular}
\caption{} \label{table1c}
\end{subtable}
\begin{subtable}{0.5\textwidth}
  \centering
\begin{tabular}{ l | c}
  ${}$ & $\mathbb{P}_{XY}$ \\
	\hline
  $X=0, Y=0$ & 0.126 \\
  $X=0, Y=1$ & 0.214 \\
  $X=1, Y=0$ & 0.254 \\
  $X=1, Y=1$ & 0.406 \\
\end{tabular}
\caption{} \label{table1d}
\end{subtable}

\caption{Example of a situation where $\int \mathbb{P}_{XY|Z} ~ d\mathbb{P}_{Z} =  \int \mathbb{P}_{X|Z}{P}_{Y|Z} ~ d\mathbb{P}_{Z}$ when $X \not \ci Y | Z$ using binary variables.} \label{table_example}
\end{table}
Here, we have chosen the probabilities in the tables carefully by satisfying the following equation:
\begin{equation} \label{solve_eq}
\begin{aligned}
&\int \mathbb{P}_{XY|Z} ~ d\mathbb{P}_{Z} =  \int \mathbb{P}_{X|Z}{P}_{Y|Z} ~ d\mathbb{P}_{Z}\\
\iff &\mathbb{P}_{Z=0} (\mathbb{P}_{X|Z=0}\mathbb{P}_{Y|Z=0}) + \mathbb{P}_{Z=1} (\mathbb{P}_{X|Z=1}\mathbb{P}_{Y|Z=1})\\
& = \mathbb{P}_{Z=0} \mathbb{P}_{XY|Z=0} + \mathbb{P}_{Z=1} \mathbb{P}_{XY|Z=1}.
\end{aligned}
\end{equation}
Of course, the equality holds when we have conditional independence $\mathbb{P}_{XY|Z} = \mathbb{P}_{X|Z} \mathbb{P}_{Y|Z}$. We are however interested in the case when conditional dependence holds. We therefore instantiated the values of Tables \ref{table1a} and \ref{table1b} as well as the second column in Table \ref{table1c} ($\mathbb{P}_{XY|Z=0}$) such that $\mathbb{P}_{XY|Z=0} \not = \mathbb{P}_{X|Z=0} \mathbb{P}_{Y|Z=0}$. We then solved for $\mathbb{P}_{XY|Z=1}$ using Equation \ref{solve_eq} in order to complete Table \ref{table1c}. This ultimately yielded Table \ref{table1d}.

Notice that we obtain a unique value for $\mathbb{P}_{XY|Z=1}$ by solving Equation \ref{solve_eq}. Hence, $\mathbb{P}_{XY|Z=1}$ has Lebesgue measure zero on the interval $[0,1]$, once we have defined all of the other variables in the equation. Thus, $\Sigma_{XY \cdot Z} = 0$ is not always equivalent to $X \ci Y | Z$, but satisfying the condition $\int \mathbb{P}_{XY|Z} ~ d\mathbb{P}_{Z} =  \int \mathbb{P}_{X|Z}{P}_{Y|Z} ~ d\mathbb{P}_{Z}$ when $X \not \ci Y | Z$ requires a very particular setup which is probably rarely encountered in practice.

The aforementioned argument motivates us to also consider the following statistic using a finite dimensional partial cross-covariance matrix:
\begin{equation}
\mathcal{S}^{\prime} = n\| \widehat{\Sigma}_{AB \cdot C} \|_F^2,
\end{equation}
where we have replaced $\ddot{A}$ with $A$. The above statistic is a generalization of linear partial correlation, because we consider uncorrelatedness of the residuals of non-linear functional transformations after performing non-linear regression. The asymptotic distribution for $\mathcal{S}$ in Theorem \ref{thm_asymp} also holds for $\mathcal{S}^{\prime}$, when we replace $\ddot{A}$ with $A$. Here, we use the hypotheses:
\begin{equation} \label{hypo3}
\begin{aligned}
&H_0: \| \Sigma_{AB \cdot C} \|_F^2 =0, \\
&H_1: \| \Sigma_{AB \cdot C} \|_F^2 > 0.
\end{aligned}
\end{equation}

In practice, the test which uses $\mathcal{S}^{\prime}$, which we now call the Randomized conditional Correlation Test (RCoT), usually rivals or outperforms RCIT and KCIT, because (1) nearly all conditionally dependent variables encountered in practice are also conditionally correlated after at least one functional transformation, and (2) we can easily calibrate the null distribution of the test using $\mathcal{S}^{\prime}$ even when $Z$ has large cardinality. We will therefore find this test useful for replacing RCIT when we have large conditioning set sizes ($\geq 4$).

\section{Experiments}

We carried out experiments to compare the empirical performance of the following tests:
\begin{itemize}
\item RCIT: uses $\mathcal{S}$ with the Lindsay-Pilla-Basak approximation,
\item RCoT: uses $\mathcal{S}^{\prime}$ with the Lindsay-Pilla-Basak approximation,
\item KCIT: uses $\mathcal{S}_K$ with a simulated null by bootstrap.
\end{itemize}
Note that KCIT with the gamma approximation performs \textit{slightly} faster than KCIT with bootstrap (e.g., less than 200ms faster on average at 2000 samples in our experiments), but the bootstrap results in a significantly better calibrated null distribution. We focus on large sample size ($\geq 500$) scenarios because we can just apply KCIT with bootstrap otherwise. We ran all experiments using the R programming language (Microsoft R Open) on a laptop with 2.60 GHz of CPU and 16GB of RAM. 

\subsection{Hyperparameters}
We used the same hyperparameters for RCIT and RCoT. Namely, we used the median Euclidean distance heuristic across the first 500 samples of $\ddot{X}$, $X$, $Y$ and $Z$ for choosing the $\sigma_{\ddot{X}},$ $\sigma_X,$ $\sigma_Y,$ and $\sigma_Z$ hyperparameters for the Gaussian RBF kernels $k_{\sigma}(x,y) = \text{exp}(- \| x-y\|^2 / \sigma)$, respectively\footnote{We also tried setting $\sigma_Z$ to the median distance divided by 1.5, 2 or 3. However, these values gave progressively worse performance on average.} \citep{Gretton08,LopezPaz14}. We also fixed the number of Fourier features for $\ddot{X}$, $X$ and $Y$ to 5 and the number of Fourier features for $Z$ to 25. We standardized all original and Fourier variables to mean zero unit variance in order to help ensure numerically stable computations. Finally, we set $\gamma$ to $1\text{E-}10$ in order to keep bias minimal. 

With KCIT, we set $\sigma$ to the squared median Euclidean distance between $(X,Y)$ using the first 500 samples times double the conditioning set size; the hyperparameters as described in the original paper, the hyperparameters in the author-provided MATLAB implementation and the hyperparameters of RCIT/RCoT all gave worse performance.

\subsection{Type I Error}
A good statistical test should control the Type I error rate at any specified $\alpha$. We therefore analyzed the Type I error rates of the three CI tests as a function of sample size and conditioning set size. We evaluated the algorithms using the Kolmogorov-Smirnov (KS) test statistic. Recall that the KS test uses the following statistic:
\begin{equation}
\mathcal{K} = \sup_{x \in \mathbb{R}} | \widehat{F}(x) - F(x) | = \| \widehat{F}_X - F_X \|_{\infty},
\end{equation}
\noindent where $\widehat{F}_X$ denotes the empirical CDF, and $F_X$ some comparison CDF. If the sample comes from $\mathbb{P}_X$, then $\mathcal{K}$ converges to 0 almost surely as $n \rightarrow \infty$ by the Glivenko-Cantelli theorem.

Now a good CI test controls the Type I error rate at any $\alpha$ value, when we have a uniform sampling distribution of the p-values over $[0,1]$. Therefore, a good CI test should have a small KS statistic value, when we set $F_X$ to the uniform distribution over $[0,1]$.

To compute the KS statistic values, we generated data from 1000 post non-linear models \citep{Zhang11,Doran14}. We can describe each post non-linear model as follows: $X=g_1(Z+\varepsilon_1), Y=g_2(Z+\varepsilon_2)$, where $Z,\varepsilon_1,\varepsilon_2$ have jointly independent standard Gaussian distributions, and $g_1,g_2$ denote smooth functions. We always chose $g_1,g_2$ uniformly from the following set of functions: $\{(\cdot), (\cdot)^2, (\cdot)^3, \text{tanh}(\cdot), \text{exp}(-\|\cdot\|_2 )\}$. Thus, we have $X \ci Y | Z$ in any case. Notice also that this situation is more general than the additive noise models proposed in \citep{Ramsey14}, where we have $X=g_1(Z)+\varepsilon_1, Y=g_2(Z)+\varepsilon_2$.

\subsubsection{Sample Size} \label{typeI_ss}
We first assess the Type I error rate as a function of sample size. We used sample sizes of 500, 1000, 2000, 5000, ten thousand, one hundred thousand and one million. A good CI test should control the Type I error rate across all $\alpha$ values at any sample size. Figure \ref{fig2:ss} summarizes the KS statistic values for the three different CI tests. Observe that all tests have similar KS statistic values across different sample sizes. We conclude that all three tests perform comparably in controlling the Type I error rate with a single conditioning variable at different sample sizes.

The run time results however tell a markedly different story. Both RCIT and RCoT output a p-value much more quickly than KCIT at different sample sizes (Figure \ref{fig2:rt}). Moreover, KCIT ran out of memory at 5000 samples while RCIT and RCoT handled one million samples in a little over 6 seconds. RCIT and RCoT also completed more than two orders of magnitude faster than KCIT on average at a sample size of 2000 (Figure \ref{fig2:rt_comp}). We conclude that RCIT and RCoT are more scalable than KCIT. Moreover, the experimental results agree with standard matrix complexity theory; RCIT and RCoT scale linearly with sample size, while KCIT scales cubicly with sample size.

\subsubsection{Conditioning Set Size} \label{typeI_dim}
CCD algorithms request p-values from CI tests using large conditioning set sizes. In fact, algorithms which do not assume causal sufficiency, such as FCI, often demand very large conditioning set sizes ($>5$). We should however also realize that CCD algorithms search for \textit{minimal} conditioning sets in order to establish ancestral relations. This means that we must focus on testing for cases where $X \not \ci Y |Z$, but we have either $X \ci Y |Z \cup A$ or $X \not \ci Y |Z \cup A$, where $|A|=1$.

We therefore evaluated the Type I error rates of the CI tests as a function of conditioning set size by fixing the sample size at 1000 and then adding 1 to 10 standard Gaussian variables into the conditioning set so that $X=g_1(\frac{1}{k} \sum_{j=1}^k Z_j+\epsilon_1), Y=g_2(\frac{1}{k} \sum_{j=1}^k Z_j+\epsilon_2), k=\{1,\dots,10\}$ in 1000 models. Note that this situation corresponds to 1 to 10 common causes.

Figure \ref{fig2:dim} summarizes the KS statistic values in the aforementioned scenario. We see that the KS statistic values for RCoT remain the smallest for nearly all conditioning set sizes, followed by RCIT and then KCIT. This implies that RCoT best approximates the null distribution out of the three CI tests. We also provide the histograms of the p-values across the 1000 post non-linear models at a conditioning set size of 10 for KCIT, RCIT, and RCoT in Figures \ref{fig2:histo_kcit}-\ref{fig2:histo_rcot}. Notice that the histograms become progressively more similar to a uniform distribution. We conclude that RCoT controls its Type I error rate the best even with large conditioning set sizes while KCIT controls its rate the worst. 

Now the run times of all three tests only increased very slightly with the conditioning set size (Figure \ref{fig2:dim_rt}). However, both RCIT and RCoT still completed 40.91 times faster than KCIT on average (95\% confidence interval: $\pm$0.44). These results agree with standard matrix complexity theory, as we expect all tests to scale linearly with dimensionality.

\begin{figure}
\begin{subfigure}{.5\textwidth}
  \centering
  \includegraphics[width=.8\linewidth]{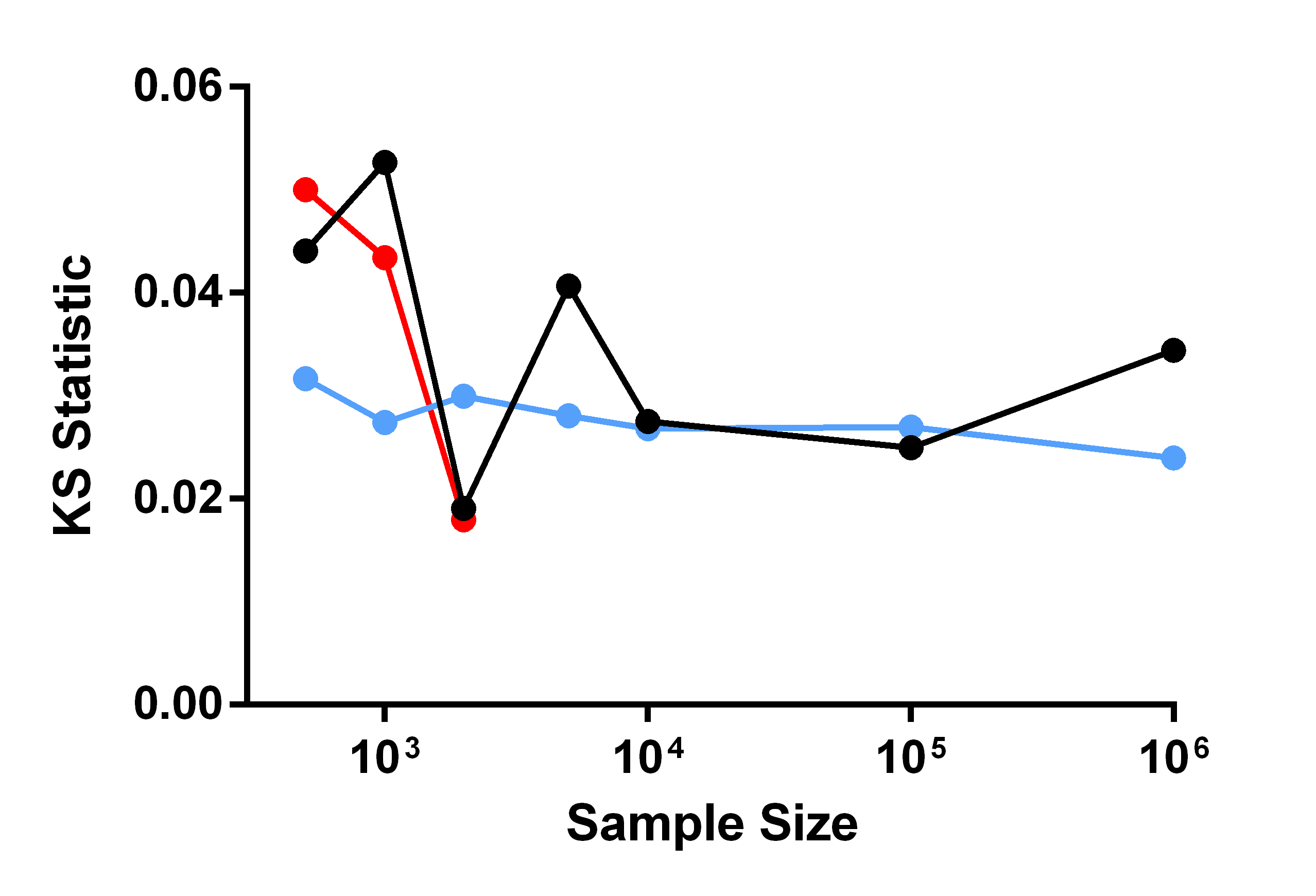}
  \caption{}
  \label{fig2:ss}
\end{subfigure}
\begin{subfigure}{.5\textwidth}
  \centering
  \includegraphics[width=.8\linewidth]{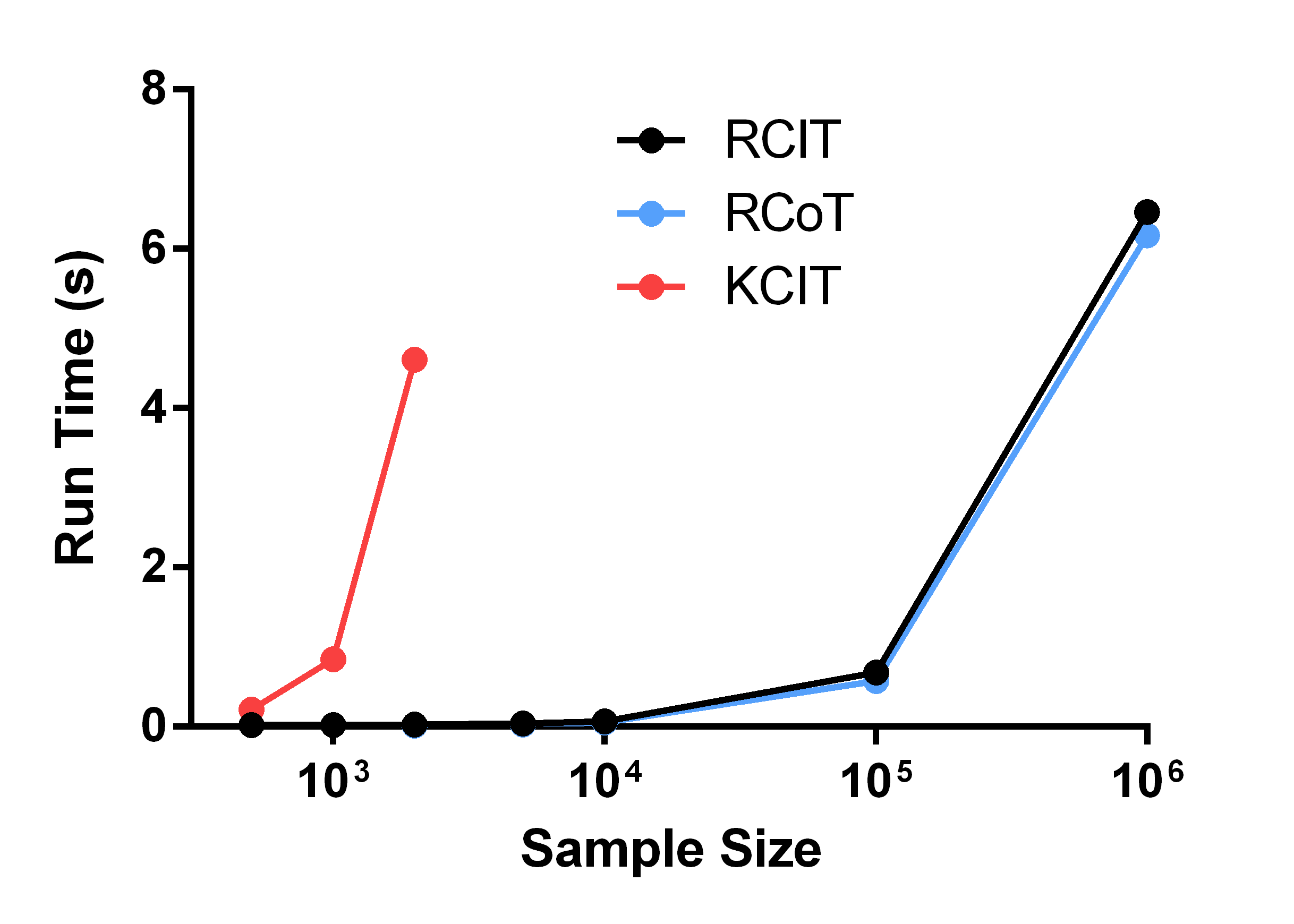}
  \caption{}
  \label{fig2:rt}
\end{subfigure}
\begin{subfigure}{.5\textwidth}
  \centering
  \includegraphics[width=.75\linewidth]{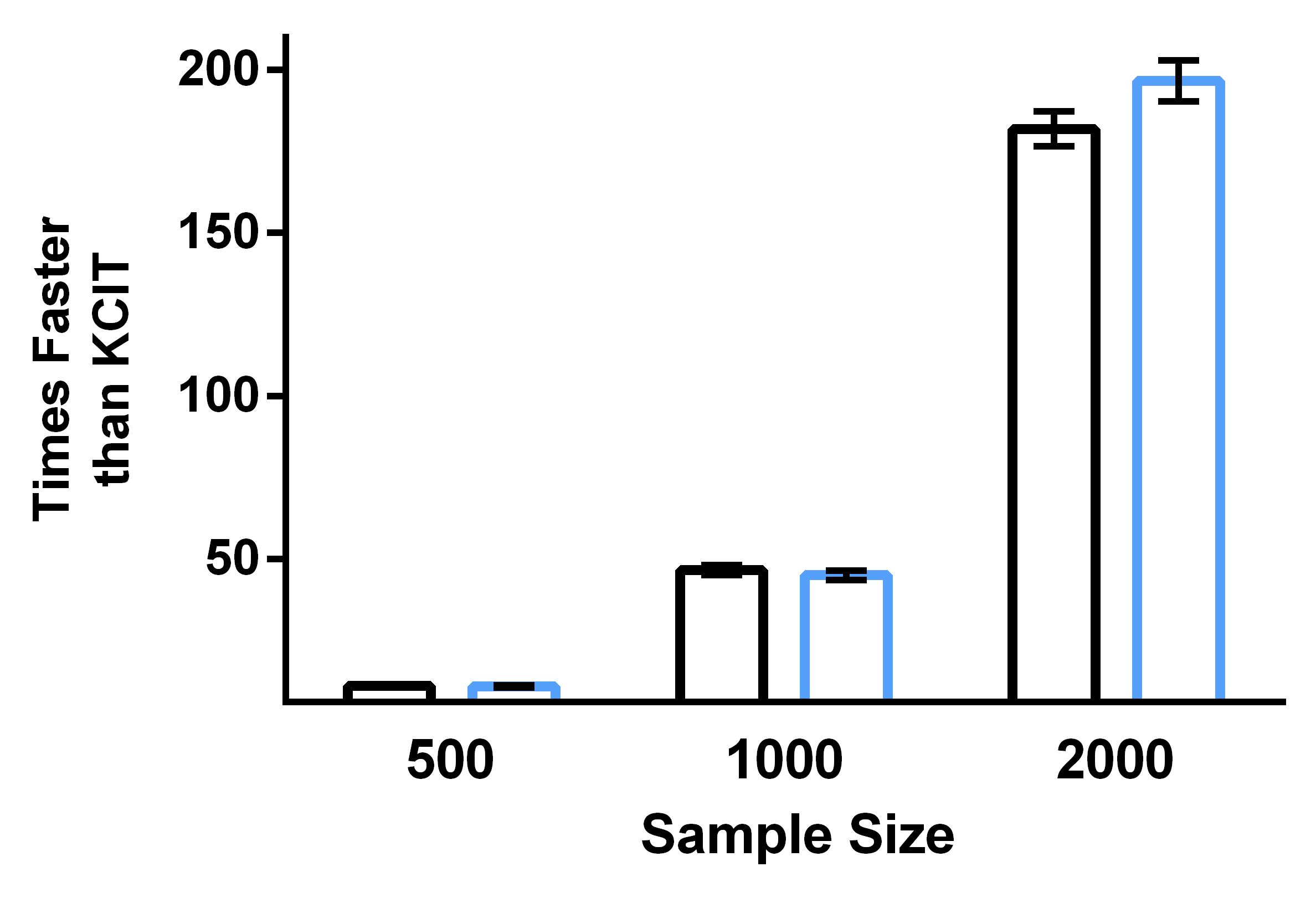}
  \caption{}
  \label{fig2:rt_comp}
\end{subfigure}
\begin{subfigure}{.5\textwidth}
  \centering
  \includegraphics[width=.8\linewidth]{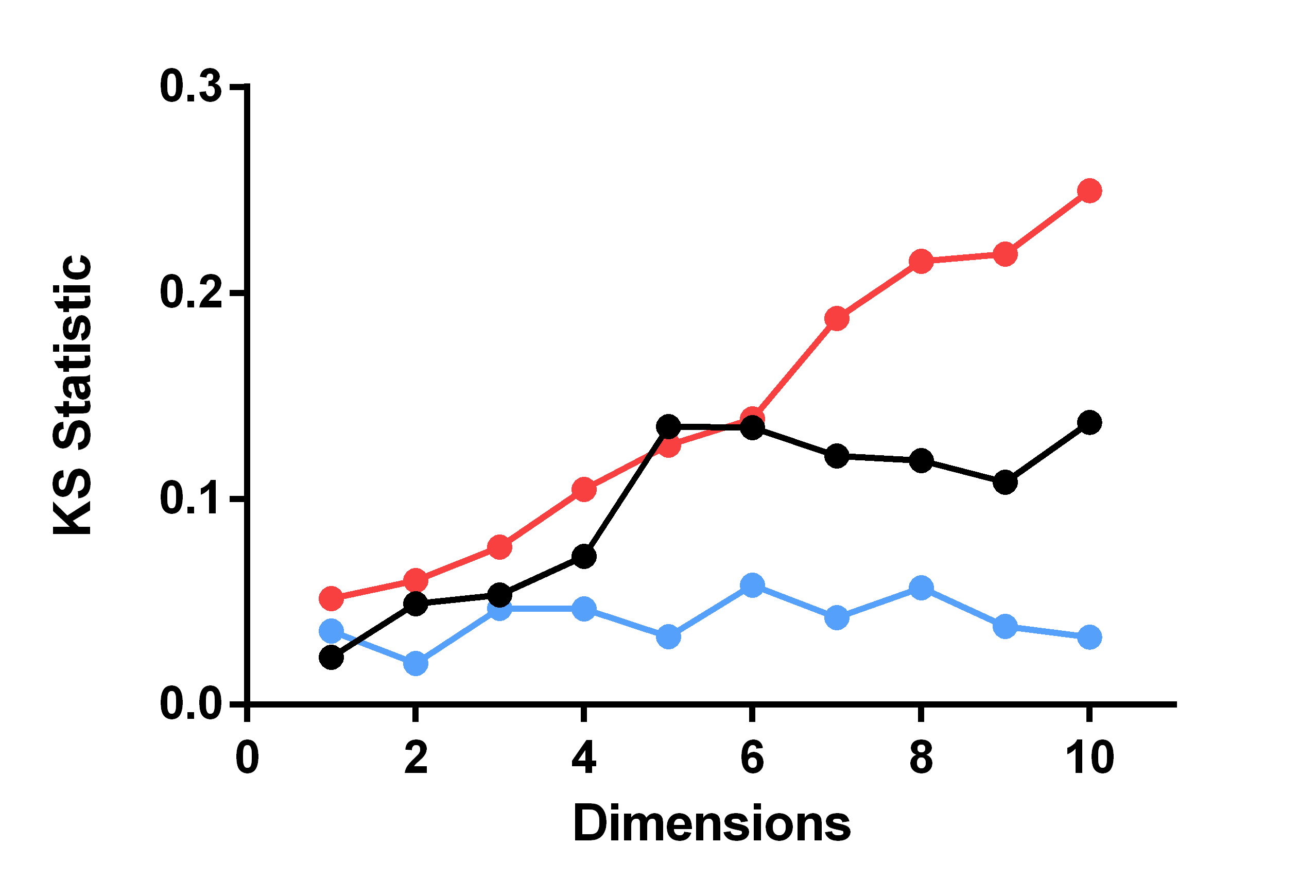}
  \caption{}
  \label{fig2:dim}
\end{subfigure}
\begin{subfigure}{.5\textwidth}
  \centering
  \includegraphics[width=.8\linewidth]{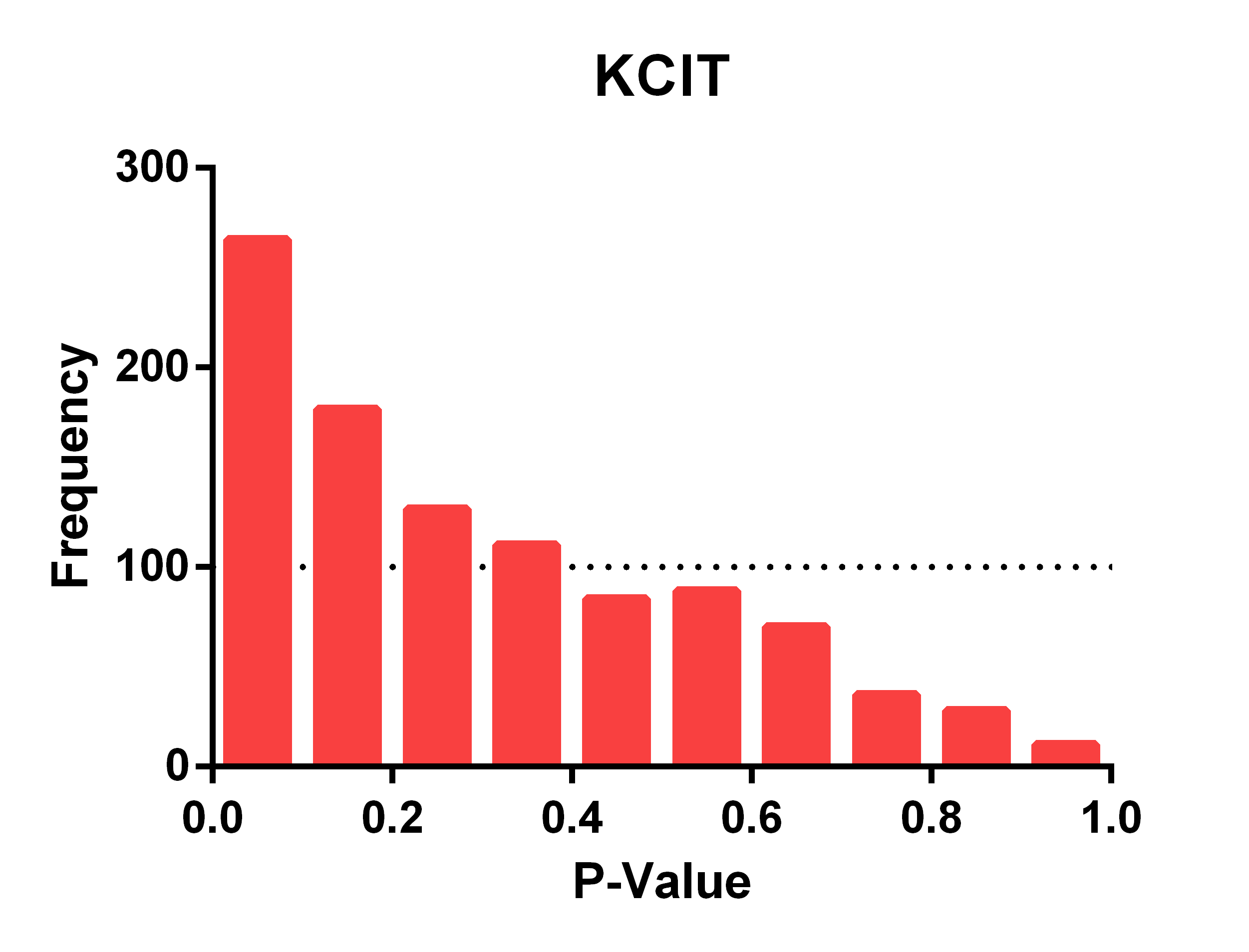}
  \caption{}
  \label{fig2:histo_kcit}
\end{subfigure}
\begin{subfigure}{.5\textwidth}
  \centering
  \includegraphics[width=.8\linewidth]{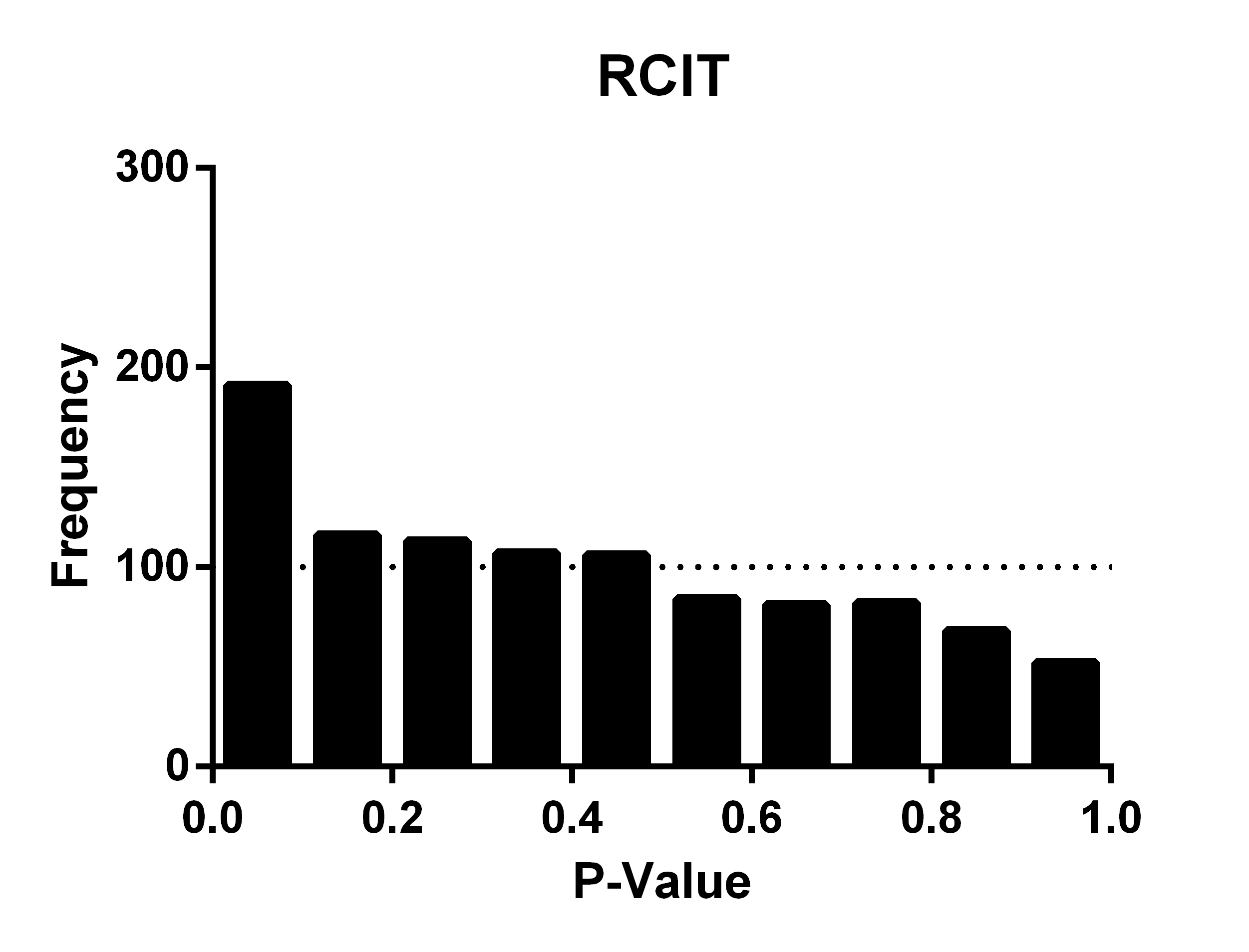}
  \caption{}
  \label{fig2:histo_rcit}
\end{subfigure}
\begin{subfigure}{.5\textwidth}
  \centering
  \includegraphics[width=.8\linewidth]{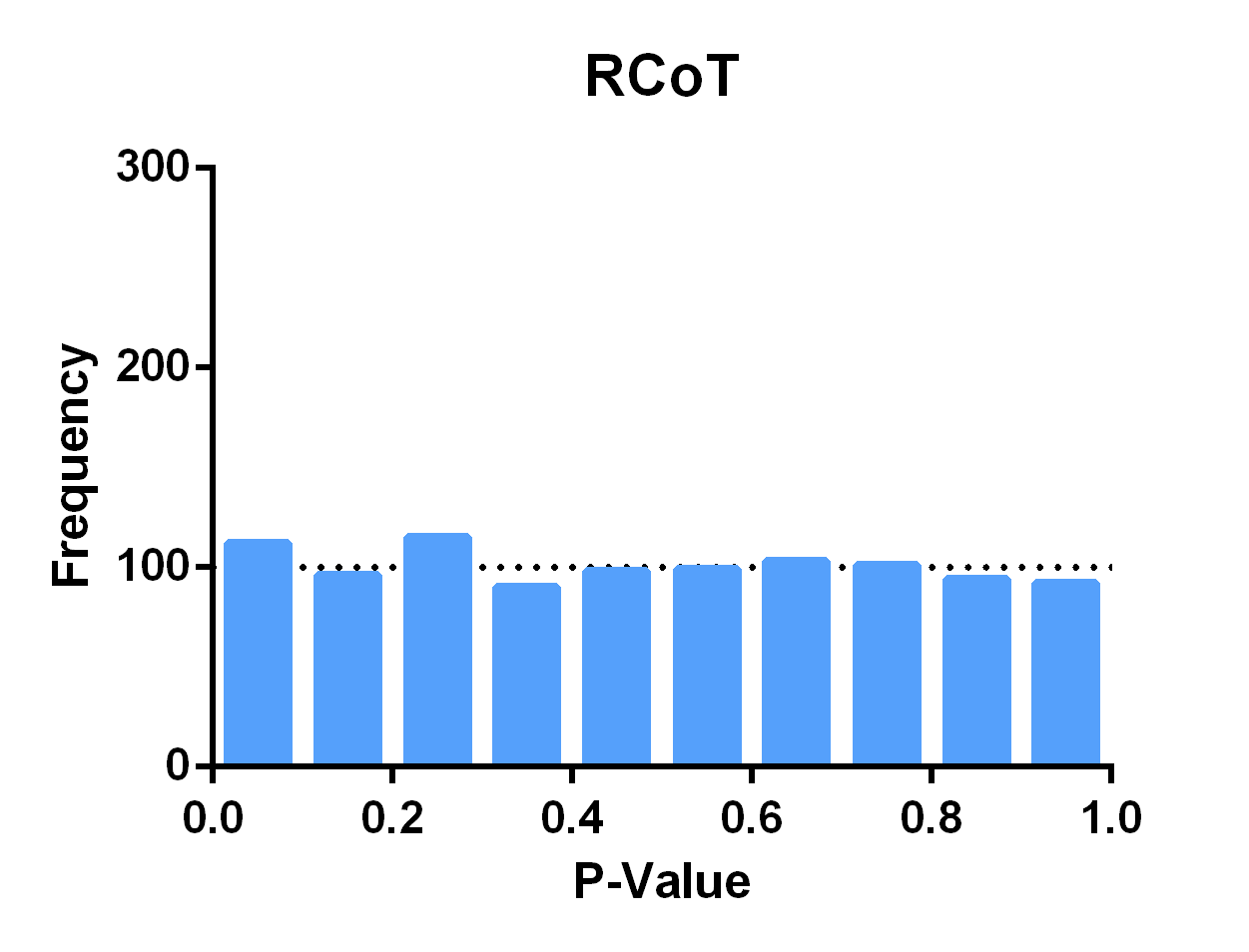}
  \caption{}
  \label{fig2:histo_rcot}
\end{subfigure}
\begin{subfigure}{.5\textwidth}
  \centering
  \includegraphics[width=.8\linewidth]{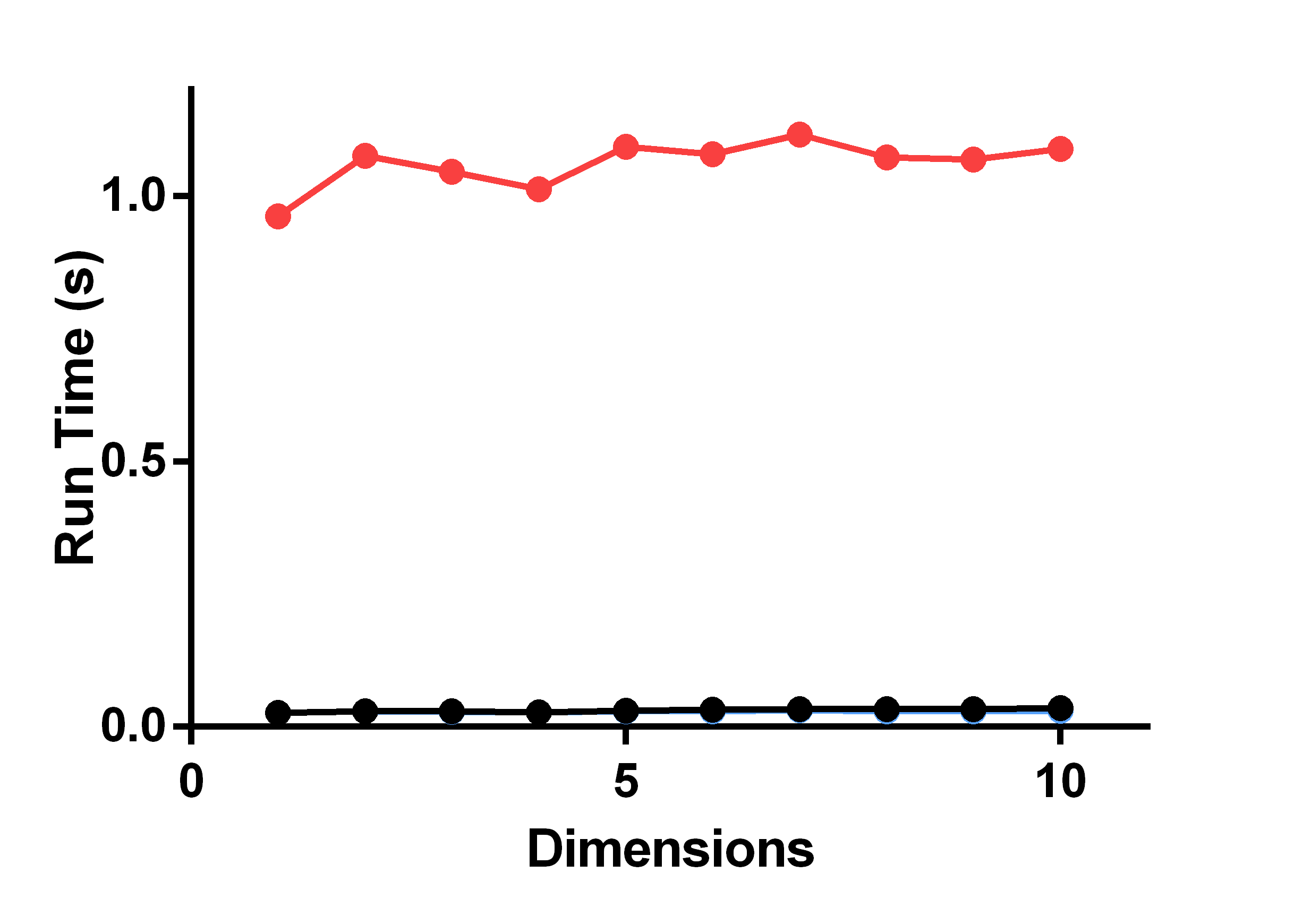}
  \caption{}
  \label{fig2:dim_rt}
\end{subfigure}

\caption{Experimental results of RCIT, RCoT and KCIT as a function of sample size and conditioning set size when conditional independence holds. (a) All tests have comparable KS statistic values as a function of sample size with a conditioning set size of one. (b) However, both RCIT and RCoT complete much faster than KCIT. (c) The relative difference in speed between RCIT vs. KCIT and RCoT vs. KCIT quickly grows with increasing sample size. (d) RCoT maintains the lowest KS statistic value with increases in the dimensionality of the conditioning set. (e-g) Histograms with a conditioning set size of 10 (the hardest case tested) show that KCIT, RCIT and RCoT obtain progressively more uniform null distributions. (h) Run times of all three tests scale linearly with dimensionality of the conditioning set. Both RCIT and RCoT have nearly identical run times in this case, so the black and blue lines overlap.} \label{fig2}
\end{figure}

\subsection{Power}
We next evaluated test power (i.e., $1-$(Type II error rate)) by computing the area under the power curve (AUPC). The AUPC corresponds to the area under the empirical CDF of the p-values returned by a CI test when the null does not hold. A CI test has higher power when its AUPC is closer to one. For example, observe that if a CI test always returns a p-value of 0 in the perfect case, then its AUPC corresponds to 1.

We examined the AUPC by adding the same small error $\varepsilon_b \sim \mathcal{N}(0,1/16)$ to both $X$ and $Y$ in 1000 post non-linear models as follows: $X=g_1(\varepsilon_b+\varepsilon_1), Y=g_2(\varepsilon_b+\varepsilon_2)$, $Z \sim \mathcal{N}(0,1)$. Here, we do not allow the CI tests to condition on $\varepsilon_b$, so we always have $X \not \ci Y | Z$; this situation therefore corresponds to a hidden common cause.

\subsubsection{Sample Size}
We first examine power as a function of sample size. We again tested sample sizes of 500, 1000, 2000, 5000, ten thousand, one hundred thousand, and one million. We have summarized the results in Figure \ref{fig3:ss}. Both RCIT and RCoT have comparable AUPC values to KCIT with sample sizes of 500, 1000 and 2000. At larger sample sizes, KCIT again did not scale due to insufficient memory, but the AUPC of both RCIT and RCoT continued to increase at similar values. We conclude that all three tests have similar power.

The run time results mimic those of Section \ref{typeI_ss}; RCIT and RCoT completed orders of magnitude faster than KCIT (Figures \ref{fig3:ss_rt} and \ref{fig3:ss_rt_comp}). 

\subsubsection{Conditioning Set Size}
We next examined power as a function of conditioning set size. To do this, we fixed the sample size at 1000 and set $Z=(Z_1,\dots,Z_k)$ with $Z \sim \mathcal{N}(0,I_k), k=\{1,\dots,10\}$ in the 1000 post non-linear models. We therefore examined how well the CI tests reject the null under an increasing conditioning set size with uninformative variables. A good CI test should either (1) maintain its power or, more realistically, (2) suffer a graceful decline in power with an increasing conditioning set size because none of the variables in the conditioning set are informative for rendering conditional independence by design.

We have summarized the results in Figure \ref{fig3:dim}. Notice that all tests have comparable AUPC values with small conditioning set sizes (between 1 and 3), but the AUPC value of KCIT gradually increases with increasing conditioning set sizes; the AUPC value should not increase under the current setup with a well-calibrated null because the extra variables are uninformative. To determine the cause of the unexpected increase in power, we permuted the values of $X$ in each run in order to assess the calibration of the null distribution. Figure \ref{fig3:dim_perm} summarizes the results, where we can see that only KCIT's KS statistic grows with an increasing conditioning set size. We can therefore claim that the increasing AUPC value of KCIT holds because of a badly calibrated null distribution with larger conditioning set sizes. We conclude that both RCIT and RCoT maintain steady power under an increasing conditioning set size while KCIT does not.

The run times in Figures \ref{fig3:dim_rt} and \ref{fig3:dim_perm_rt} again mimic those in Section \ref{typeI_dim} with RCIT and RCoT completing in a much shorter time frame than KCIT.

\begin{figure}
\begin{subfigure}{.5\textwidth}
  \centering
  \includegraphics[width=.8\linewidth]{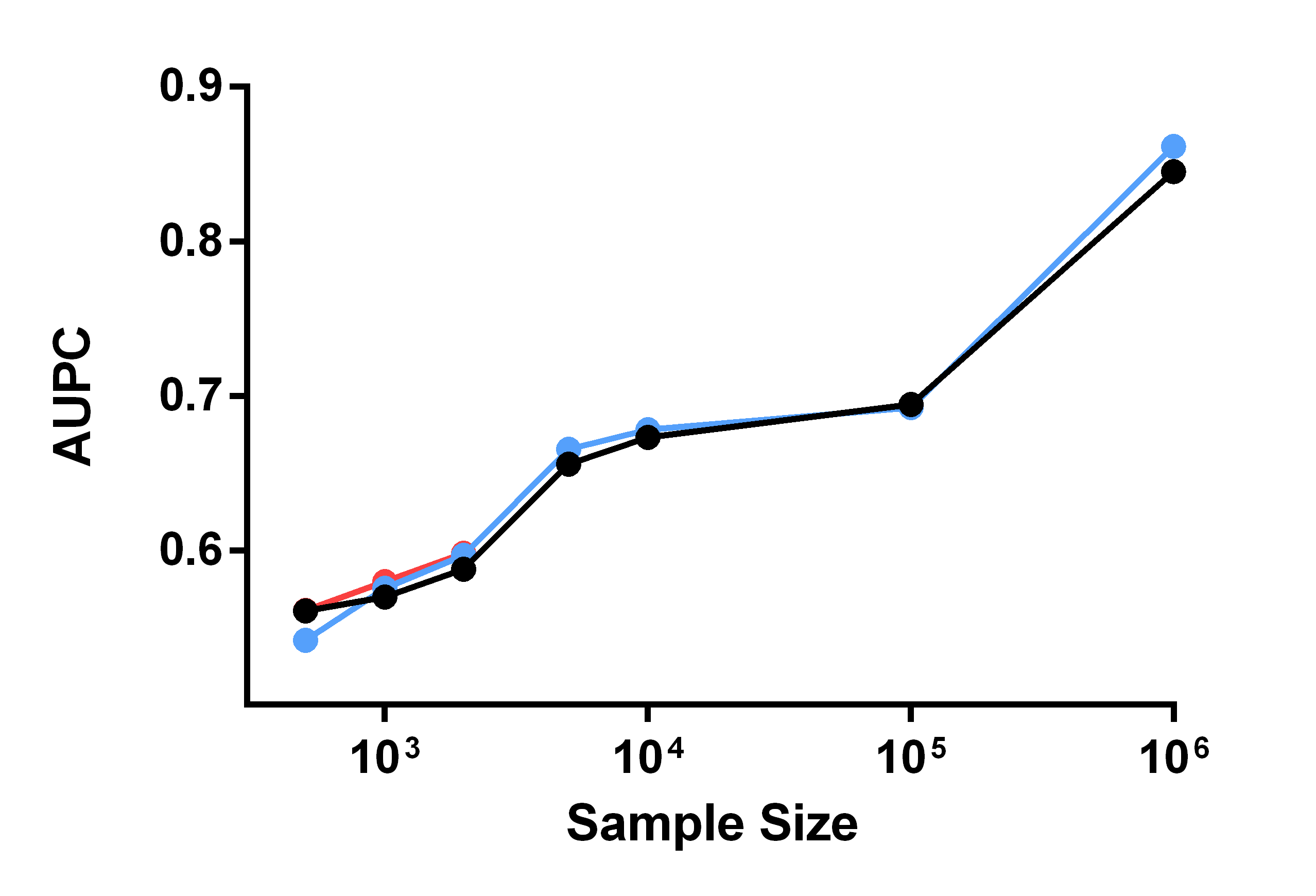}
  \caption{}
  \label{fig3:ss}
\end{subfigure}
\begin{subfigure}{.5\textwidth}
  \centering
  \includegraphics[width=.8\linewidth]{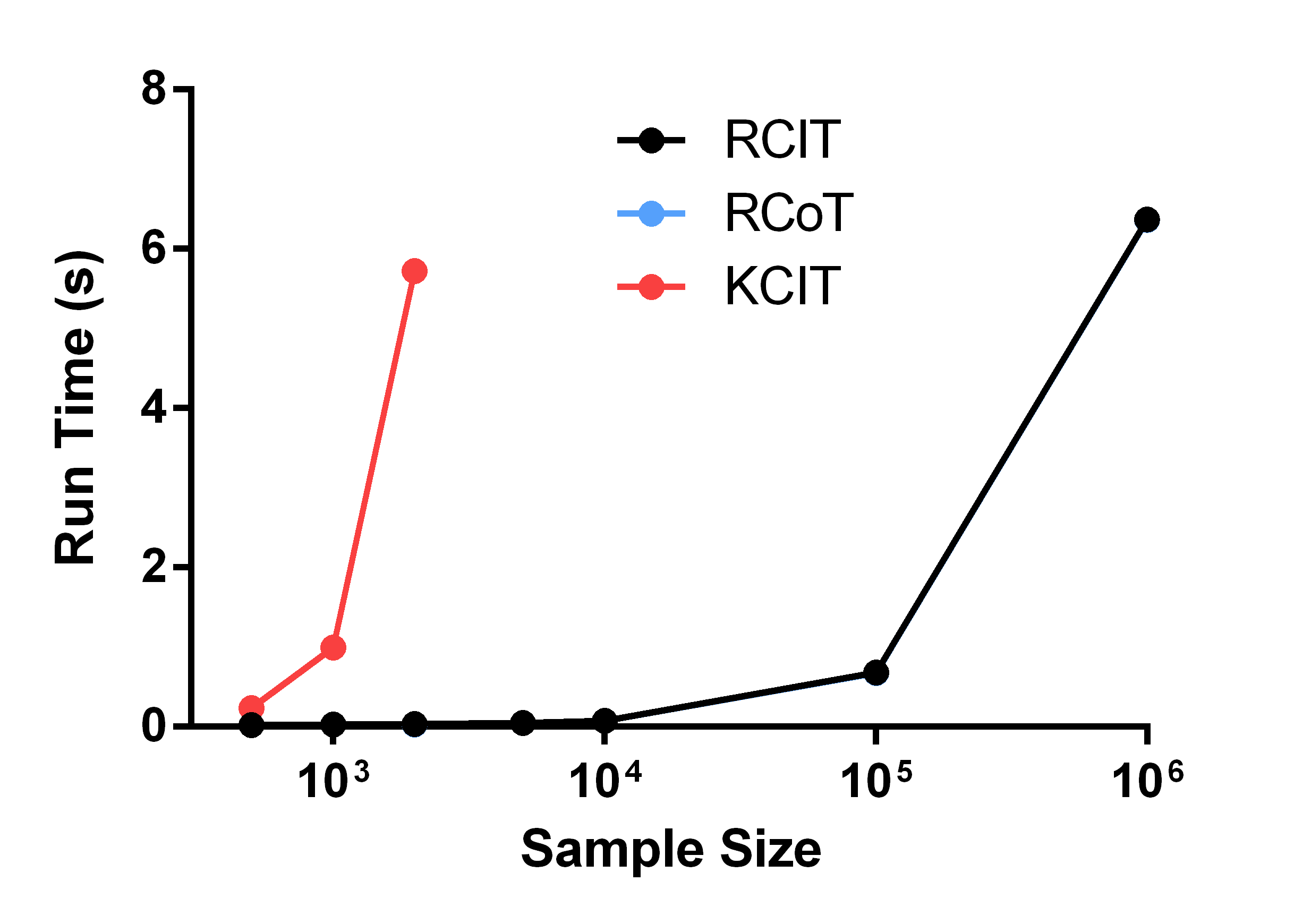}
  \caption{}
  \label{fig3:ss_rt}
\end{subfigure}
\begin{subfigure}{.5\textwidth}
  \centering
  \includegraphics[width=.8\linewidth]{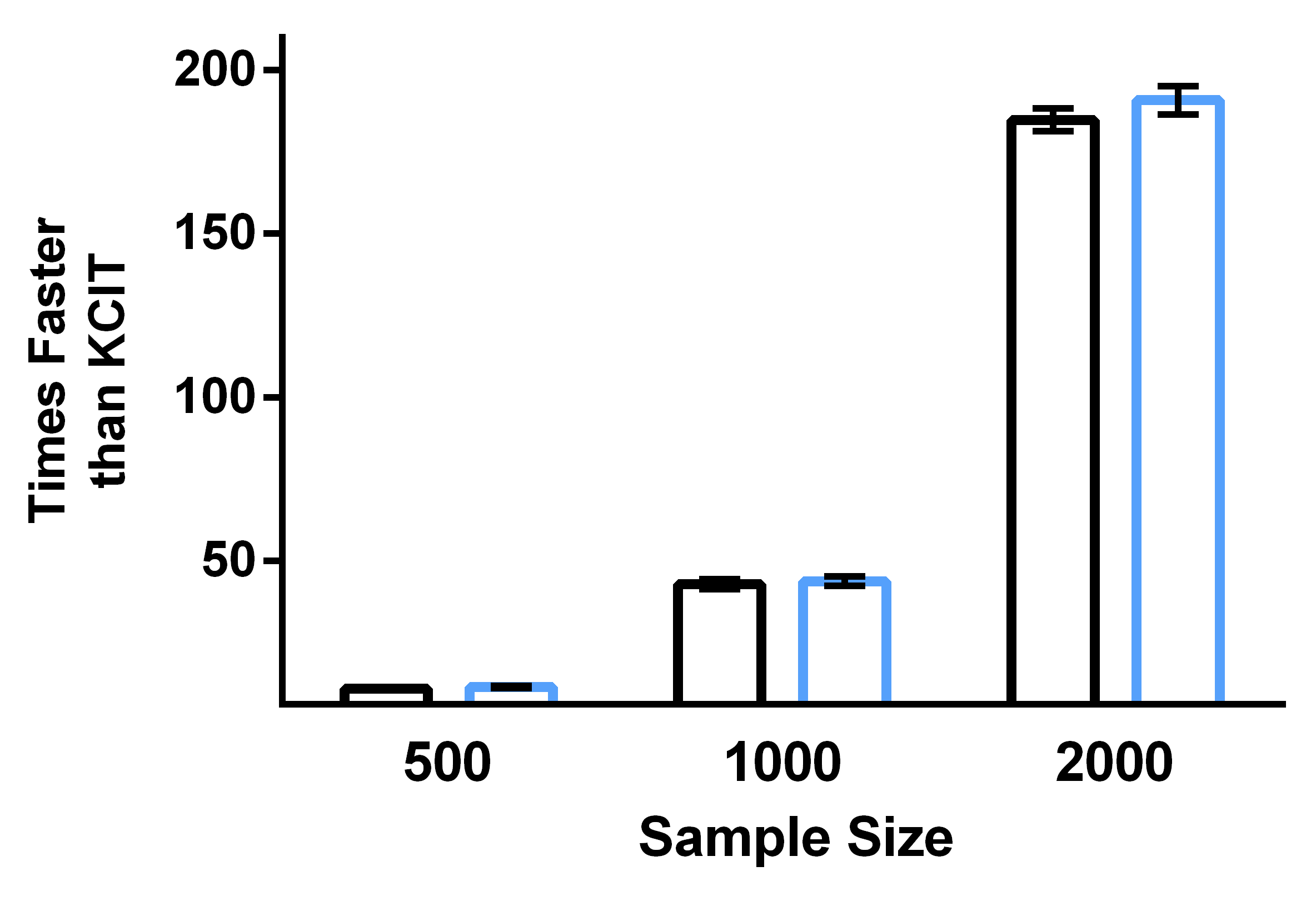}
  \caption{}
  \label{fig3:ss_rt_comp}
\end{subfigure}
\begin{subfigure}{.5\textwidth}
  \centering
  \includegraphics[width=.8\linewidth]{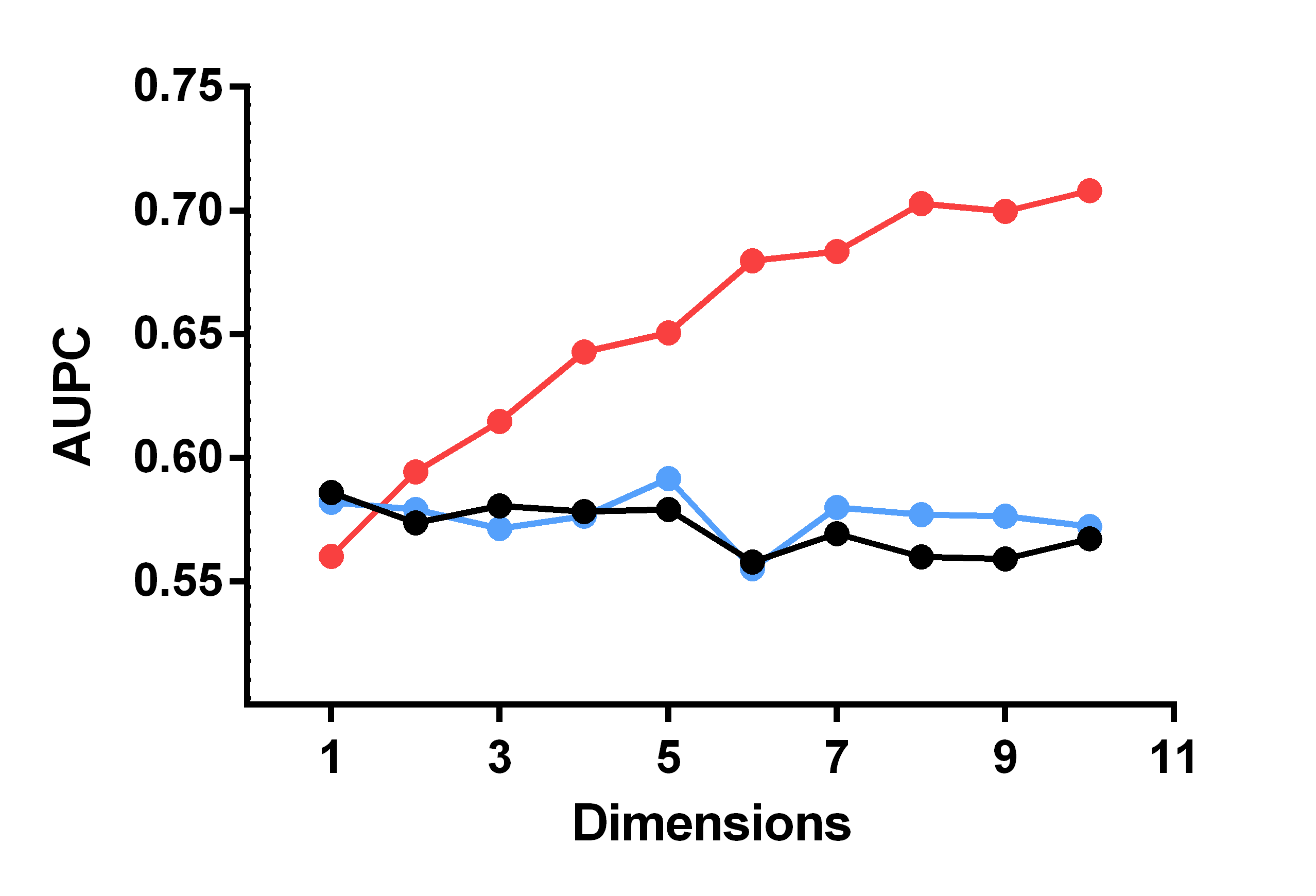}
  \caption{}
  \label{fig3:dim}
\end{subfigure}
\begin{subfigure}{.5\textwidth}
  \centering
  \includegraphics[width=.8\linewidth]{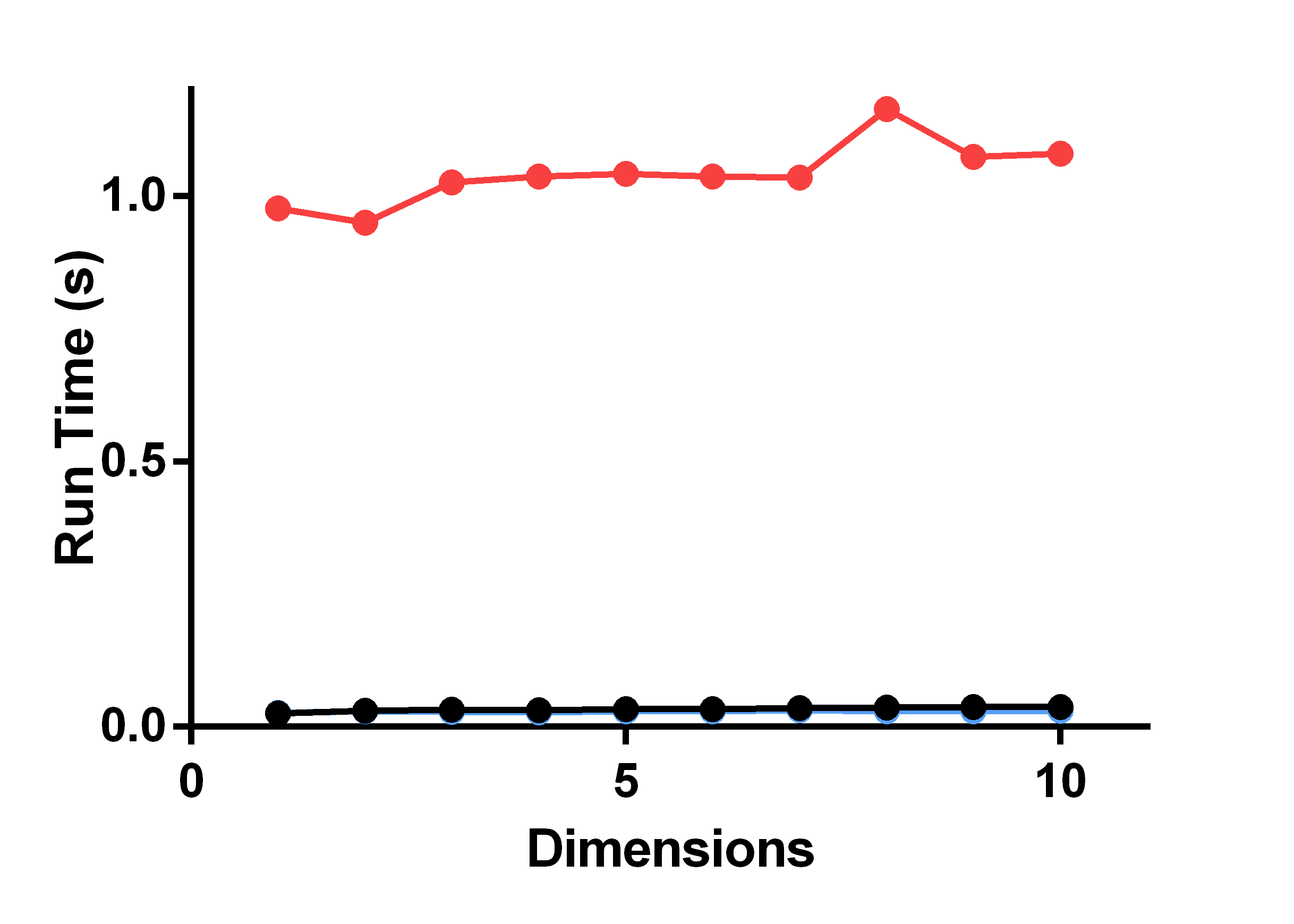}
  \caption{}
  \label{fig3:dim_rt}
\end{subfigure}
\begin{subfigure}{.5\textwidth}
  \centering
  \includegraphics[width=.8\linewidth]{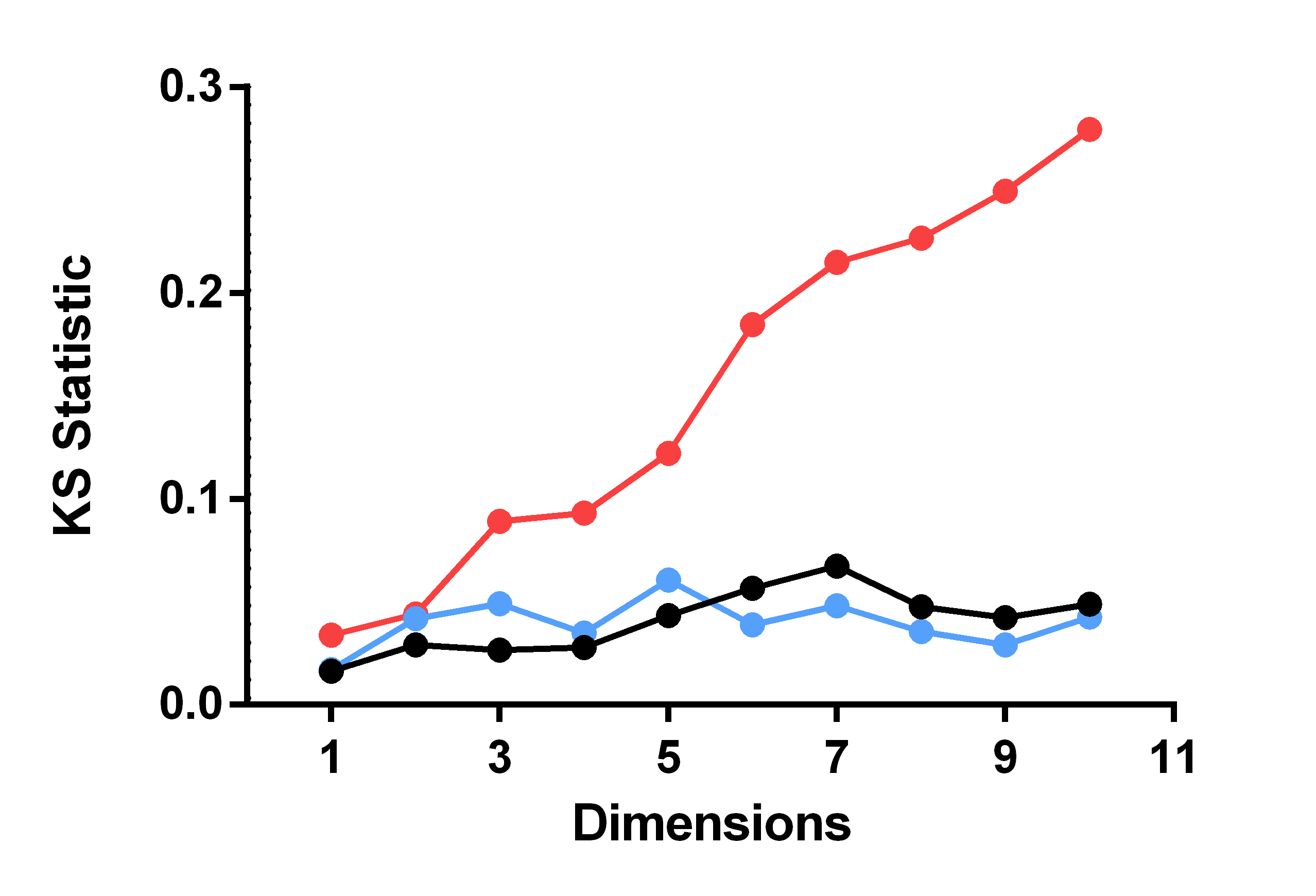}
  \caption{}
  \label{fig3:dim_perm}
\end{subfigure}
\begin{subfigure}{.5\textwidth}
  \centering
  \includegraphics[width=.8\linewidth]{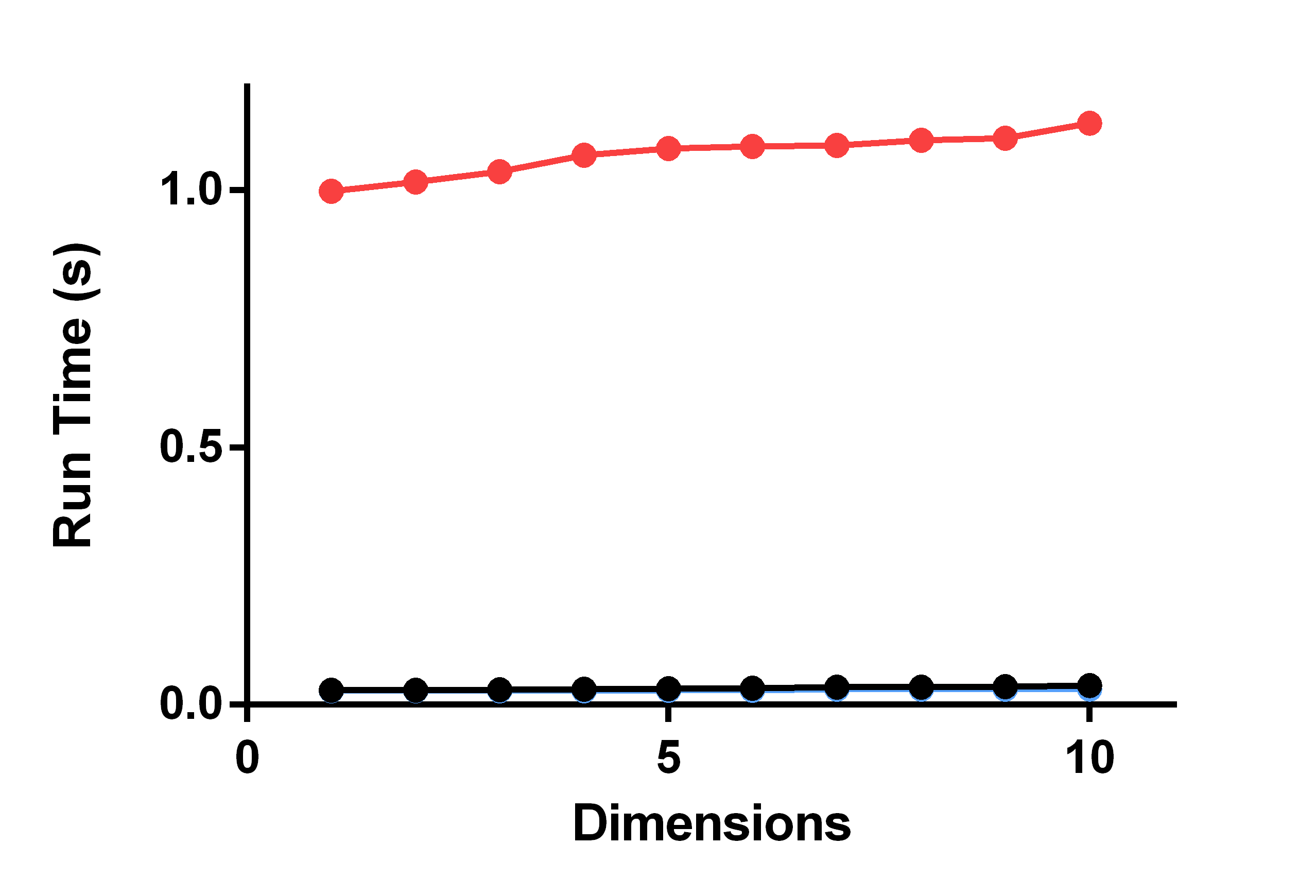}
  \caption{}
  \label{fig3:dim_perm_rt}
\end{subfigure}

\caption{Experimental results with RCIT, RCoT and KCIT as a function of sample size and conditioning set size when conditional dependence holds. (a) All tests have comparable AUPC values as a function of sample size with a conditioning set size of one. (b-c) Both RCIT and RCoT again complete much faster than KCIT. (d) KCIT's AUPC value unexpectedly increases with the dimensionality of the conditioning set. Associated run times for (d) in (e). (f) The cause of KCIT's AUPC increase lies in a badly calibrated null distribution; here we see that only KCIT's KS statistic value increases under the null. Associated run times for (f) in (g).} \label{fig3}
\end{figure}

\subsection{Causal Structure Discovery}
We next examine the accuracy of graphical structures as recovered by PC \citep{Spirtes00}, FCI \citep{Zhang08} and RFCI \citep{Colombo12} when run using RCIT, RCoT or KCIT.

We used the following procedure in \citep{Colombo12} to generate 250 different Gaussian DAGs with an expected neighborhood size $\mathbb{E}(N)=2$ and $v=20$ vertices. First, we generated a random adjacency matrix $\mathcal{A}$ with independent realizations of $\text{Bernoulli}(\mathbb{E}(N)/(v - 1))$ random variables in the lower triangle of the matrix and zeroes in the remaining entries. Next, we replaced the ones in $\mathcal{A}$ by independent realizations of a $\text{Uniform}([-1,-0.1]\cup[0.1, 1])$ random variable. We interpret a nonzero entry $\mathcal{A}_{ij}$ as an edge from $X_i$ to $X_j$ with
coefficient $\mathcal{A}_{ij}$ in the following linear model:
\begin{equation}
\begin{aligned}
&X_1 = \varepsilon_1, \\
&X_i = \sum_{r=1}^{v-1} \mathcal{A}_{ir}X_r + \varepsilon_i.
\end{aligned}
\end{equation}
for $i = 2, \dots , v$ where $\varepsilon_1, . . ., \varepsilon_v$ are mutually independent standard Gaussian random variables. The variables $\{X_1,\dots, X_v\} = \bm{X}$ then have a multivariate Gaussian distribution with mean $0$ and covariance matrix $\Sigma = (I_v - \mathcal{A})^{-1}(I_v - \mathcal{A})^{-T}$, where $I_v$ is the $v \times v$ identity matrix. To introduce non-linearities, we passed each variable in $\bm{X}$ through a non-linear function $g$ again chosen uniformly from the set $\{(\cdot)$, $(\cdot)^2$, $(\cdot)^3$, $\text{tanh}(\cdot)$, $\text{exp}(-\|\cdot\|_2 )\}$.

For FCI and RFCI, we introduced latent and selection variables using the following procedure. For each DAG, we first randomly selected a set of 0-3 latent common causes $L$. From the set $X \setminus L$, we then selected a set of 0-3 colliders as selection variables $S$. For each selection variable in $S$, we subsequently eliminated the bottom $q$ percentile of samples, where we drew $q$ according to independent realizations of a $\text{Uniform}([0.1,0.5])$ random variable. We finally eliminated all of the latent variables from the dataset. 

We ultimately created 250 different 500 sample datasets for PC, FCI and RFCI. We then ran the sample versions of PC, FCI and RFCI using RCIT, RCoT, KCIT and Fisher's z-test (FZT) at $\alpha=0.05$. We also obtained the oracle graphs by running the oracle versions of PC, FCI and RFCI using the ground truth. 

We have summarized the results as structural Hamming distances (SHDs) from the oracle graphs in Figure \ref{fig4:shd}. PC run with RCIT and PC run with RCoT both outperformed PC run with KCIT by a large margin according to paired t-tests (PC RCIT vs. KCIT, $t=\text{-}14.76, p < 2.2\text{E-}16$; PC RCoT vs. KCIT, $t=\text{-}12.87, p < 2.2\text{E-}16$). We found similar results with FCI and RFCI, although by only a small margin; 3 of the 4 comparisons fell below the Bonferonni corrected threshold of 0.05/6 and the other comparison fell below the uncorrected threshold of 0.05 (FCI RCIT vs. KCIT, $t=\text{-}2.00, p = 0.047$; FCI RCoT vs. KCIT $t=\text{-}2.96, p = 0.0034$; RFCI RCIT vs. KCIT, $t=\text{-}3.56, p = 4.5\text{E-}4$; RFCI RCoT vs. KCIT, $t=\text{-}2.80, p = 0.0055$). All algorithms with any of the kernel-based tests outperformed the same algorithms with FZT by a large margin ($p<7\text{E-}14$ in all cases). Finally, the run time results in Figure \ref{fig4:rt} show that the CCD algorithms run with RCIT and RCoT complete at least 13 times faster on average than those run with KCIT. We conclude that both RCIT and RCoT help CCD algorithms at least match the performance of the same algorithms run with KCIT, but RCIT and RCoT do so within a much shorter time frame than KCIT.

\begin{figure}
\begin{subfigure}{.5\textwidth}
  \centering
  \includegraphics[width=1\linewidth]{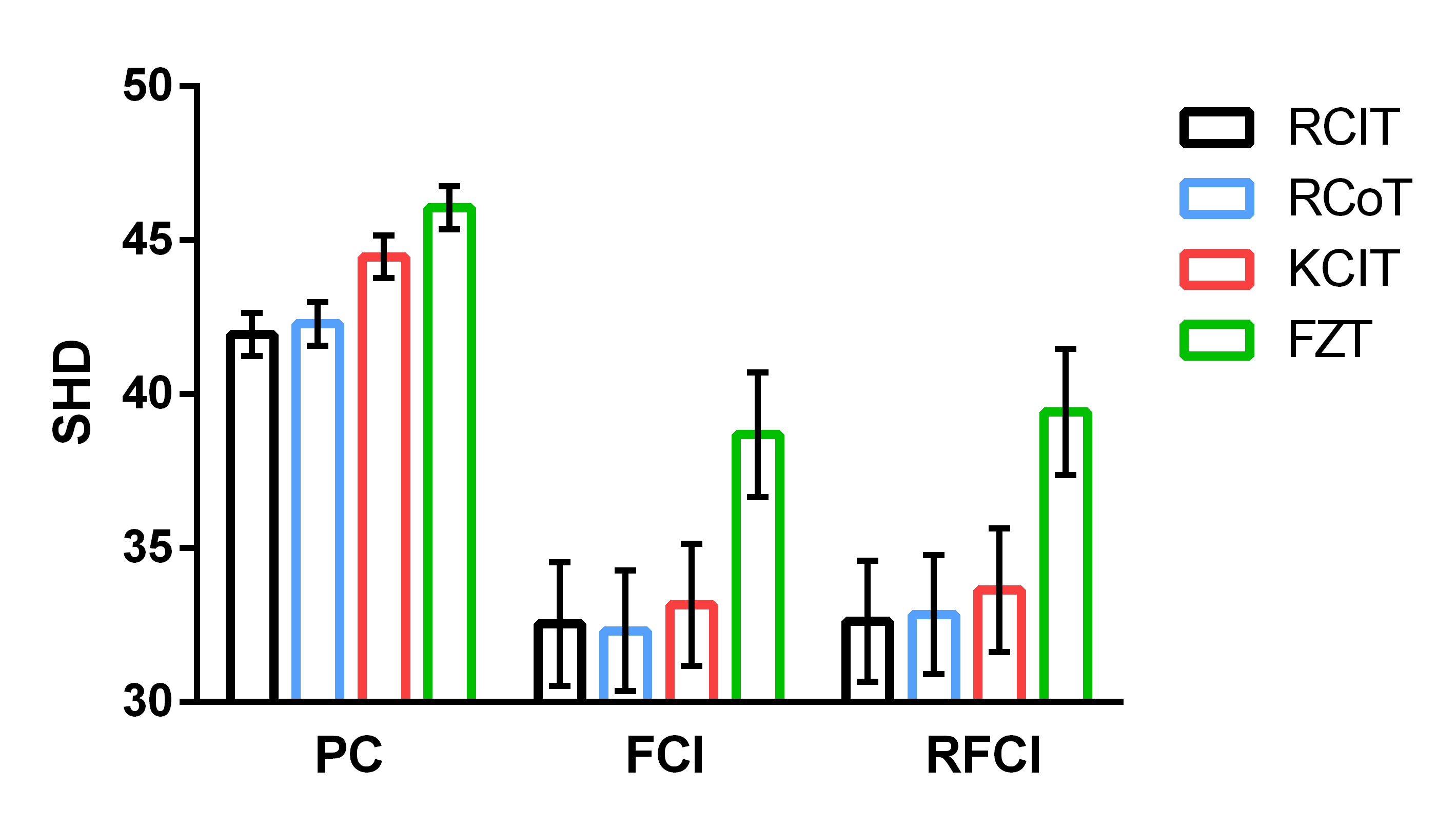}
  \caption{}
  \label{fig4:shd}
\end{subfigure}
\begin{subfigure}{.5\textwidth}
  \centering
  \includegraphics[width=.7\linewidth]{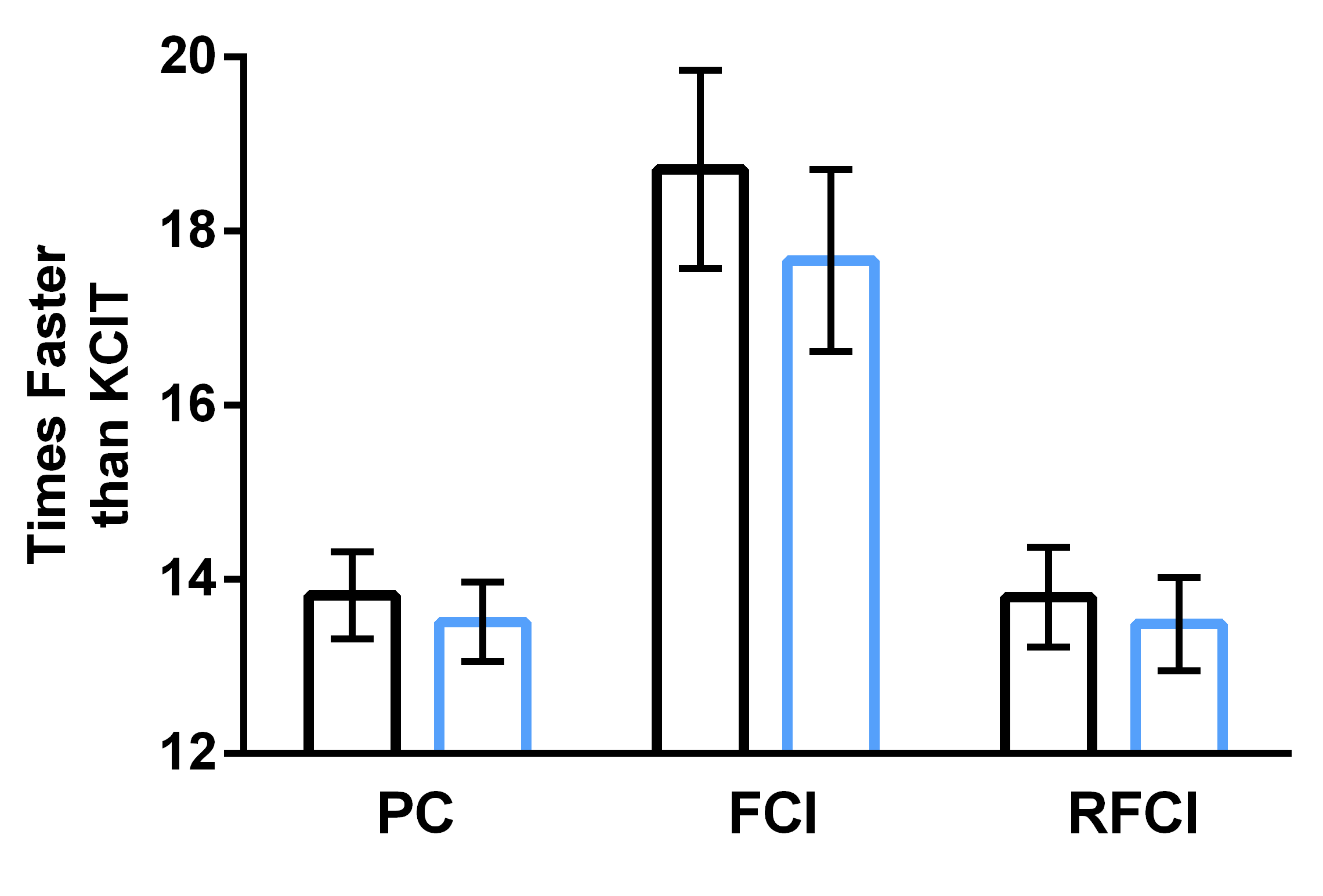}
  \caption{}
  \label{fig4:rt}
\end{subfigure}

\caption{Results of CCD algorithms as evaluated by mean (a) SHD and (b) run times. The CCD algorithms run with KCIT perform comparably (or even slightly worse) to those run with RCIT and RCoT in (a). Run times in (b) show that the CCD algorithms run with RCIT and RCoT complete at least 13 times faster on average than those with KCIT. Error bars denote 95\% confidence intervals of the mean.}
\end{figure}

\subsection{Real Data}

We finally ran PC, FCI and RFCI using RCIT, RCoT, KCIT and FZT at $\alpha=0.05$ on a publicly available longitudinal dataset from the Cognition and Aging USA (CogUSA) study \citep{McArdle15}, where scientists measured the cognition of men and women above 50 years of age. The dataset contains 815 samples, 18 variables and two waves (thus $18/2 = 9$ variables in each wave) separated by two years after some data cleaning\footnote{We specifically removed redundant variables with deterministic relations, variables with more than 1000 missing values, and then samples with missing values in any of the remaining variables.}. Note that we do not have access to a gold standard solution set in this case. However, we can utilize the time information in the dataset to detect false positive ancestral relations directed backwards in time.

We ran the CCD algorithms on 30 bootstrapped datasets. Results are summarized in Figure \ref{fig5}. Comparisons with PC did not reach the Bonferonni level among the kernel-based tests, although PC run with either RCIT or RCoT yielded fewer false positive ancestral relations on average than PC run with KCIT near an uncorrected level of 0.05 (PC RCIT vs. KCIT, $t=\text{-}2.76, p = 9.85\text{E-}3$; PC RCoT vs. KCIT, $t=\text{-}1.99, p = 0.056$). However, FCI and RFCI run with either RCIT or RCoT performed better than those run with KCIT at a Bonferroni corrected level of 0.05/6 (FCI RCIT vs. KCIT, $t=\text{-}29.57, p < 2.2\text{E-}16$; FCI RCoT vs. KCIT, $t=\text{-}17.41, p < 2.2\text{E-}16$; RFCI RCIT vs. KCIT, $t=\text{-}6.50, p = 4.13\text{E-}7$; RFCI RCoT vs. KCIT, $t=\text{-}7.39, p = 3.85\text{E-}8$). The CCD algorithms run with FZT also gave inconsistent results; PC run with FZT performed the best on average, but FCI and RFCI run with FZT also performed second from the worst. Here, we should trust the outputs of FCI and RFCI more strongly than those of PC, since both FCI and RFCI allow latent common causes and selection bias which often exist in real data. Next, CCD algorithms run with RCIT performed comparably to those run with RCoT (PC RCIT vs. RCoT, $t=\text{-}1.05, p = 0.301$; FCI RCIT vs. RCoT, $t=\text{-}1.54, p = 0.134$; RFCI RCIT vs. RCoT, $t=\text{-}0.89, p = 0.380$). We finally report that the CCD algorithms run with RCIT and RCoT complete at least 40 times faster on average than those run with KCIT (Figure \ref{fig5:run_time}). We conclude that CCD algorithms run with either RCIT or RCoT perform at least as well as those run with KCIT on this real dataset but with large reductions run time.

\begin{figure}
\begin{subfigure}{0.5\textwidth}
  \centering
  \includegraphics[width=0.8\linewidth]{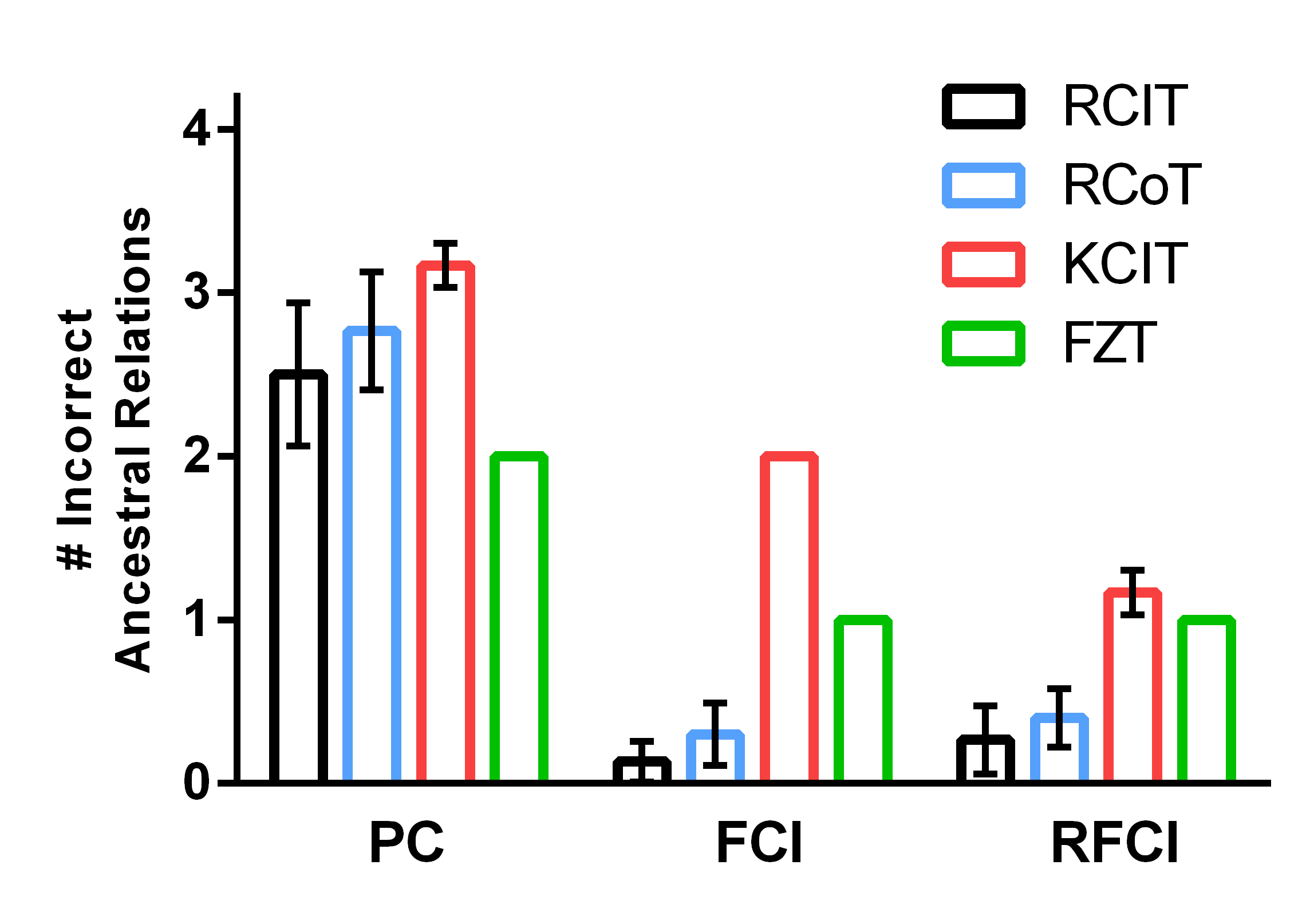}
  \caption{}
  \label{fig5:ancestral}
\end{subfigure}
\begin{subfigure}{0.5\textwidth}
  \centering
  \includegraphics[width=0.8\linewidth]{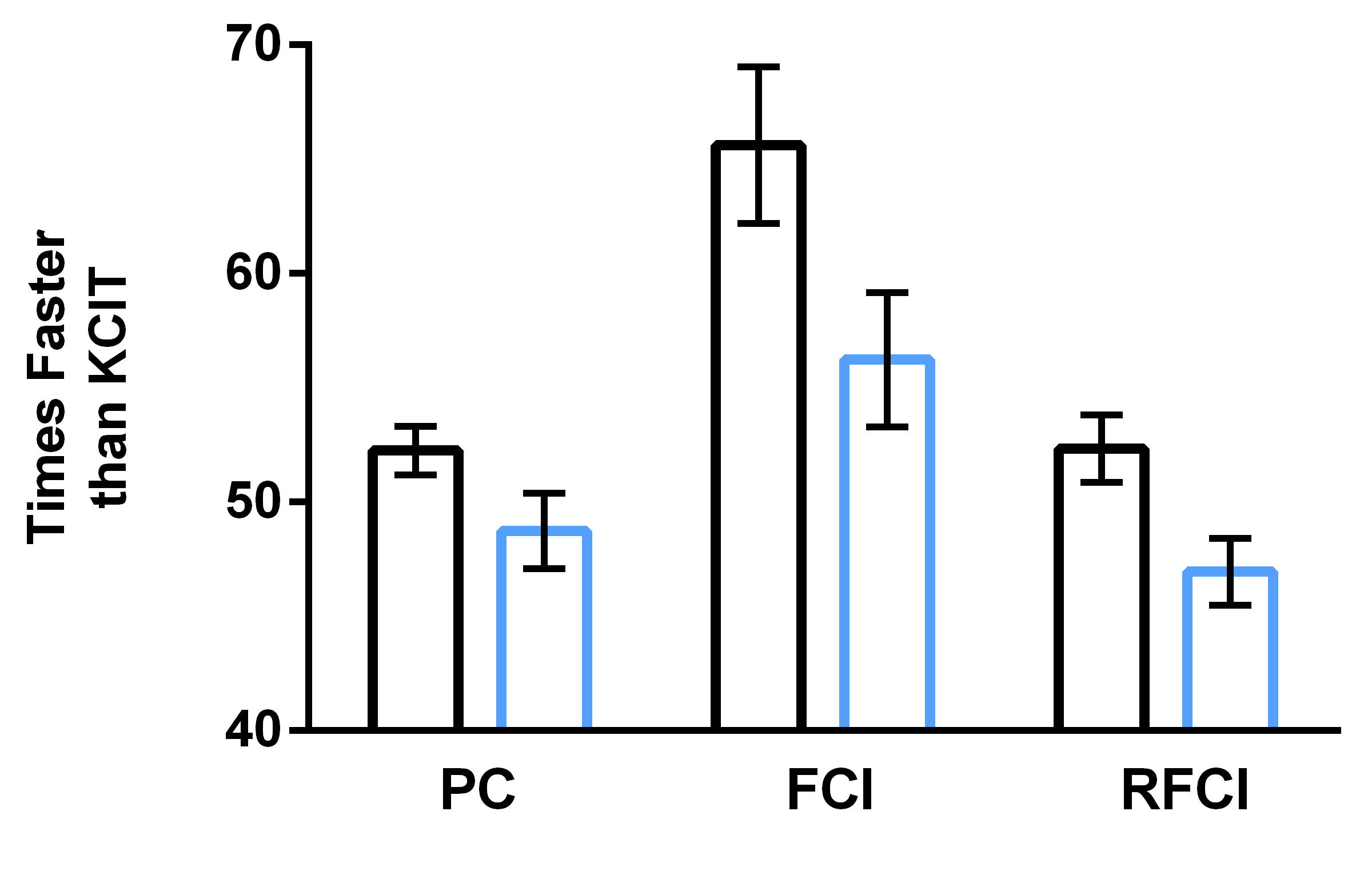}
  \caption{}
  \label{fig5:run_time}
\end{subfigure}

\caption{Results of CCD algorithms as evaluated on real longitudinal data. Part (a) displays mean counts of the number of ancestral relations directed backwards time. We do not display 95\% confidence intervals when we computed a standard error of zero. Part (b) summarizes the mean run times.} \label{fig5}
\end{figure}

\section{Conclusion}

We developed two statistical tests called RCIT and RCoT for fast non-parametric CI testing. Both RCIT and RCoT approximate KCIT by sampling Fourier features. Moreover, the proposed tests return p-values orders of magnitude faster than KCIT in the large sample size setting. RCoT in particular also has a better calibrated null distribution than KCIT especially with larger conditioning set sizes. In causal graph discovery, RCIT and RCoT help CCD algorithms recover graphical structures at least as accurately as KCIT but, most importantly, also allow the algorithms to complete in a much shorter time frame. We believe that the speedups provided by RCIT and RCoT will make non-parametric causal discovery more accessible to scientists who wish to apply CCD algorithms to their datasets.

\section*{Acknowledgments}
Research reported in this publication was supported by grant U54HG008540 awarded by the National Human Genome Research Institute through funds provided by the trans-NIH Big Data to Knowledge initiative. The research was also supported by the National Library of Medicine of the National Institutes of Health under award numbers T15LM007059 and R01LM012095. The content is solely the responsibility of the authors and does not necessarily represent the official views of the National Institutes of Health.

\section*{References}
\bibliographystyle{abbrvnat}
\bibliography{thesis_biblio}

\section{Appendix}

We will prove the central limit theorem (CLT) for the sample covariance matrix. We first have the following sample covariance matrices with known and unknown expectation vector, respectively:
\begin{equation}
\begin{aligned}
\ddot{\Sigma} &= \frac{1}{n} \sum_{i=1}^n \big[X_i-\mathbb{E}(X) \big]\big[X_i-\mathbb{E}(X) \big]^T,\\
\widehat{\Sigma} &= \frac{1}{n-1} \sum_{i=1}^n \big[X_i-\widehat{\mathbb{E}}(X) \big]\big[X_i-\widehat{\mathbb{E}}(X) \big]^T. \\
\end{aligned}
\end{equation}

Now observe that we may write:
\begin{equation} \label{rewrite1}
\begin{aligned}
&\hspace{4.5mm} (n-1)\widehat{\Sigma}\\
& = \sum_{i=1}^n \big[X_i-\mathbb{E}(X)-(\widehat{\mathbb{E}}(X) - \mathbb{E}(X)) \big] \big[X_i-\mathbb{E}(X)-(\widehat{\mathbb{E}}(X) - \mathbb{E}(X)) \big]^T\\
&=  \sum_{i=1}^n (X_i-\mathbb{E}(X))(X_i-\mathbb{E}(X))^T + n(\widehat{\mathbb{E}}(X)-\mathbb{E}(X))(\widehat{\mathbb{E}}(X)-\mathbb{E}(X))^T\\
& \hspace{20mm} - 2(\widehat{\mathbb{E}}(X)-\mathbb{E}(X))\sum_{i=1}^n(X_i-\mathbb{E}(X))^T \\
& = n \ddot{\Sigma} - n(\widehat{\mathbb{E}}(X)-\mathbb{E}(X))(\widehat{\mathbb{E}}(X)-\mathbb{E}(X))^T
\end{aligned}  
\end{equation}
It follows that:
\begin{equation} \label{rewrite2}
\begin{aligned}
&\hspace{4.5mm} \sqrt{n}(\widehat{\Sigma} - \Sigma) \\
& = \sqrt{n} \big( \frac{n-1}{n-1} \widehat{\Sigma} - \Sigma \big) \\
& = \sqrt{n} \big( \frac{n}{n-1} \ddot{\Sigma} - \frac{n}{n-1} (\widehat{\mathbb{E}}(X)-\mathbb{E}(X))(\widehat{\mathbb{E}}(X)-\mathbb{E}(X))^T - \Sigma \big) \\
& = \frac{n\sqrt{n}}{n-1}\ddot{\Sigma} - \frac{n\sqrt{n}}{n-1} (\widehat{\mathbb{E}}(X)-\mathbb{E}(X))(\widehat{\mathbb{E}}(X)-\mathbb{E}(X))^T - \sqrt{n} \Sigma \\
& = \frac{n\sqrt{n}}{n-1} \ddot{\Sigma} - \frac{n\sqrt{n}}{n-1} (\widehat{\mathbb{E}}(X)-\mathbb{E}(X))(\widehat{\mathbb{E}}(X)-\mathbb{E}(X))^T - \frac{n-1}{n-1}\sqrt{n} \Sigma \\
& = \frac{n\sqrt{n}}{n-1} (\ddot{\Sigma} - \Sigma) - \frac{n\sqrt{n}}{n-1} (\widehat{\mathbb{E}}(X)-\mathbb{E}(X))(\widehat{\mathbb{E}}(X)-\mathbb{E}(X))^T + \frac{\sqrt{n}}{n-1} \Sigma
\end{aligned}  
\end{equation}

We are now ready to state the result:
\begin{lemma} \label{cov_CLT}
Let $X_1, \dots, X_n$ refer to a sequence of i.i.d. random k-vectors. Denote the expectation vector and covariance matrix of $X_1$ as $\mu_1$ and $\Sigma_1$, respectively. Assume that $\breve{\Sigma}_1=\textup{Cov}\big[ v_u((X_1-\mu_1)(X_1-\mu_1)^T)\big]$ is positive definite, where $v_u(M)$ denotes the vectorization of the upper triangular portion of a real symmetric matrix $M$. Then, we have:
\begin{equation}
\sqrt{n}(v_u(\widehat{\Sigma}) - v_u(\Sigma_1)) \stackrel{d}{\rightarrow} \mathcal{N}(0, \breve{\Sigma}_1).
\end{equation}
\end{lemma}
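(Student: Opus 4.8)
The plan is to reduce the statement to the ordinary multivariate CLT applied to the i.i.d.\ random vectors $Y_i \triangleq v_u\big((X_i-\mu_1)(X_i-\mu_1)^T\big)$, and then absorb the discrepancy between $\widehat{\Sigma}$ and the ``known-mean'' estimator $\ddot{\Sigma}$ into an asymptotically negligible term via Slutsky's theorem. First I would note that $\mathbb{E}(Y_1) = v_u(\Sigma_1)$ and $\mathrm{Cov}(Y_1) = \breve{\Sigma}_1$, so by the classical CLT, $\sqrt{n}\big(\tfrac{1}{n}\sum_{i=1}^n Y_i - v_u(\Sigma_1)\big) \stackrel{d}{\rightarrow} \mathcal{N}(0,\breve{\Sigma}_1)$; the hypothesis that $\breve{\Sigma}_1$ is positive definite is exactly what is needed for a non-degenerate limit (and implicitly requires $\mathbb{E}\|X_1\|^4 < \infty$, which I would state as part of the setup). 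Observe that $\tfrac{1}{n}\sum_{i=1}^n Y_i = v_u(\ddot{\Sigma})$, so this already gives the CLT for $\ddot{\Sigma}$.

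Next I would invoke the algebraic identity \eqref{rewrite2} derived just before the lemma, which gives
\begin{equation}
\sqrt{n}\big(v_u(\widehat{\Sigma}) - v_u(\Sigma_1)\big) = \frac{n}{n-1}\cdot \sqrt{n}\big(v_u(\ddot{\Sigma}) - v_u(\Sigma_1)\big) - R_n,
\end{equation}
where $R_n$ collects the remaining two terms, namely $\tfrac{n\sqrt{n}}{n-1}\,v_u\big((\widehat{\mathbb{E}}(X)-\mu_1)(\widehat{\mathbb{E}}(X)-\mu_1)^T\big)$ minus $\tfrac{\sqrt{n}}{n-1}v_u(\Sigma_1)$. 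The coefficient $\tfrac{n}{n-1} \to 1$, so by Slutsky it suffices to show $R_n \stackrel{p}{\rightarrow} 0$. The second piece of $R_n$ is $O(n^{-1/2})$ deterministically times a fixed vector, hence vanishes. For the first piece, the weak law of large numbers gives $\widehat{\mathbb{E}}(X) - \mu_1 \stackrel{p}{\rightarrow} 0$, and in fact $\sqrt{n}(\widehat{\mathbb{E}}(X)-\mu_1) = O_p(1)$ by the CLT for the sample mean; therefore each entry of $(\widehat{\mathbb{E}}(X)-\mu_1)(\widehat{\mathbb{E}}(X)-\mu_1)^T$ is $O_p(n^{-1})$, and multiplying by the $\tfrac{n\sqrt{n}}{n-1} = O(\sqrt{n})$ prefactor leaves an $O_p(n^{-1/2}) \stackrel{p}{\rightarrow} 0$ term. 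Combining, $\sqrt{n}(v_u(\widehat{\Sigma}) - v_u(\Sigma_1))$ has the same limit as $\sqrt{n}(v_u(\ddot{\Sigma}) - v_u(\Sigma_1))$, namely $\mathcal{N}(0,\breve{\Sigma}_1)$.

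The main obstacle is not any single deep step but rather bookkeeping: making sure the $v_u(\cdot)$ vectorization is applied consistently (so that the quadratic-form term in \eqref{rewrite2}, which is naturally written as a symmetric matrix, is vectorized the same way as $\widehat{\Sigma}$ and $\Sigma_1$), and confirming the moment condition $\mathbb{E}\|X_1\|^4<\infty$ is in force so that $\breve{\Sigma}_1$ is well-defined and the CLT for $Y_i$ applies. I would also remark that the lemma as used in Theorem~\ref{thm_asymp} is applied to the (non-symmetric, cross-covariance) matrix $\widehat{\Sigma}_{\ddot{A}B\cdot C}$; the same argument goes through verbatim with $v(\cdot)$ in place of $v_u(\cdot)$ and with $X_i$ replaced by the stacked vector whose outer product generates the relevant cross-covariance, so no separate proof is needed there.
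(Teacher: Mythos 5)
Your proposal is correct and follows essentially the same route as the paper's proof: both use the algebraic identity \eqref{rewrite2} to relate $\widehat{\Sigma}$ to the known-mean estimator $\ddot{\Sigma}$, apply the CLT to $v_u\big((X_i-\mu_1)(X_i-\mu_1)^T\big)$, and dispose of the remainder (the $O(\sqrt{n})$ prefactor times the $O_p(n^{-1})$ outer product of the centered sample mean, plus a deterministic $O(n^{-1/2})$ term) via Slutsky. The only cosmetic difference is that you invoke the multivariate CLT directly, whereas the paper reduces to the univariate CLT through the Cram\'er--Wold device; your added remark that finiteness of fourth moments is implicitly required for $\breve{\Sigma}_1$ to be well defined is a fair clarification of the paper's hypotheses.
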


\begin{proof}
Consider the quantity $a^T \big[ \sqrt{n}(v_u(\widehat{\Sigma}) - v_u(\Sigma_1)) \big]$ $=\sqrt{n}(a^T v_u(\widehat{\Sigma}) - a^T v_u(\Sigma_1))$ where $a \in \mathbb{R}^{k(k+1)/2}\setminus \{ 0 \}$. Note that $a^T v_u\big[(X_1-\mu_1)(X_1-\mu_1)^T\big]$, $\dots$, $a^T v_u\big[(X_n-\mu_1)(X_n-\mu_1)^T\big]$ is a sequence of i.i.d. random variables with expectation $a^T v_u(\Sigma_1)$ and variance $a^T \breve{\Sigma}_1 a$. Moreover observe that $\breve{\Sigma}_1 < \infty$ because $\breve{\Sigma}_1$ is positive definite. We can therefore apply the univariate central limit theorem to conclude that:
\begin{equation} \label{lemma_c1}
\sqrt{n}(a^T v_u(\ddot{\Sigma}_1) - a^T v_u(\Sigma_1)) \stackrel{d}{\rightarrow} \mathcal{N}(0, a^T \breve{\Sigma}_1 a),
\end{equation}
where $\ddot{\Sigma}_1 = \frac{1}{n} \sum_{i=1}^n (X_i - \mu_1) (X_i - \mu_1)^T$. We would however like to claim that:
\begin{equation}
\sqrt{n}(a^T v_u(\widehat{\Sigma}) - a^T v_u(\Sigma_1)) \stackrel{d}{\rightarrow} \mathcal{N}(0, a^T \breve{\Sigma}_1 a).
\end{equation}
In order to prove this, we use \ref{rewrite2} and set:
\begin{equation}
\sqrt{n}(a^T v_u(\widehat{\Sigma}) - a^T v_u(\Sigma_1)) = a^TA_n + a^TB_n,
\end{equation}
where we have:
\begin{equation}
\begin{aligned}
A_n &= \frac{n}{n-1} \sqrt{n} (v_u(\ddot{\Sigma}_1) - v_u(\Sigma)), \\
B_n &=  \frac{\sqrt{n}}{n-1} v_u(\Sigma_1) - \frac{n\sqrt{n}}{n-1} v_u \big[ (\widehat{\mathbb{E}}(X)-\mu_1)(\widehat{\mathbb{E}}(X)-\mu_1)^T \big].
\end{aligned}
\end{equation}
We already know from \ref{lemma_c1} that:
\begin{equation}
\sqrt{n}(a^T v_u(\ddot{\Sigma}_1) - a^T v_u(\Sigma_1)) \stackrel{d}{\rightarrow} \mathcal{N}(0, a^T \breve{\Sigma}_1 a).
\end{equation}
Therefore, so does $a^T A_n$ by Slutsky's lemma, when we view the sequence of constants $\frac{n}{n-1}$ as a sequence of random variables. For $a^T B_n$, we know that:
\begin{equation}
\sqrt{n}(a^T \widehat{\mathbb{E}}(X) - a^T \mu_1) \stackrel{d}{\rightarrow} \mathcal{N}(0, a^T \Sigma_1 a),
\end{equation}
by viewing $a^T X_1, \dots, a^T X_n$ as a sequence of random variables, noting that $E(X_1X_1^T)<\infty$ because $\breve{\Sigma}_1$ is positive definite and then applying the univariate central limit theorem. We thus have $a^T B_n \stackrel{p}{\rightarrow} 0$. We may then invoke Slutsky's lemma again for $a^TA_n + a^TB_n$ and claim that:
\begin{equation}
\sqrt{n}(a^T v_u(\widehat{\Sigma}) - a^T v_u(\Sigma_1)) \stackrel{d}{\rightarrow} \mathcal{N}(0, a^T \breve{\Sigma}_1 a).
\end{equation}
We conclude the lemma by invoking the Cramer-Wold device.

\end{proof}

\end{document}